\def\showauthornotes{0}
\def\showkeys{0}
\def\showcolorlinks{1}
\def\usemicrotype{1}
\def\arxivmode{0}
\def\fastmode{0}
\newcommand{\llangle}{\left\langle}
\newcommand{\rrangle}{\right\rangle}
\newtheorem{theorem}{Theorem}[section]
\newtheorem*{theorem*}{Theorem}
\newtheorem*{proposition*}{Proposition}
\newtheorem{lemma}[theorem]{Lemma}
\newtheorem*{lemma*}{Lemma}
\newtheorem{corollary}[theorem]{Corollary}
\newtheorem*{conjecture*}{Conjecture}
\newtheorem*{fact*}{Fact}
\newtheorem*{exercise*}{Exercise}
\newtheorem*{hypothesis*}{Hypothesis}
\theoremstyle{definition}
\newtheorem{exercise-easy}[theorem]{Exercise}
\newtheorem{exercise-med}[theorem]{Exercise}
\newtheorem{exercise-hard}[theorem]{Exercise$^\star$}
\newtheorem*{claim*}{Claim}
\newtheorem{remark}[theorem]{Remark}
\newtheorem*{remark*}{Remark}
\newtheorem*{observation*}{Observation}
\let\mathbb\vvmathbb
\definecolor{bleudefrance}{rgb}{0.01, 0.1, 1.0}
\definecolor{azure}{rgb}{0.0, 0.5, 1.0}
\newcommand{\savehyperref}[2]{\texorpdfstring{\hyperref[#1]{#2}}{#2}}
\newcommand{\Sref}[1]{\hyperref[#1]{\S\ref*{#1}}}
\newcommand{\mynotes}[1]{{\sffamily\small\color{teal}{#1}}\medskip}
\newcommand{\Authornote}[2]{{\sffamily\small\color{blue}{[#1: #2]}}\medskip}
\newcommand{\Authornotecolored}[3]{{\sffamily\small\color{#1}{[#2: #3]}}}
\newcommand{\Authorcomment}[2]{{\sffamily\small\color{gray}{[#1: #2]}}}
\newcommand{\Authorstartcomment}[1]{\sffamily\small\color{gray}[#1: }
\newcommand{\Authorfnote}[2]{\footnote{\color{red}{#1: #2}}}
\newcommand{\Authormarginmark}[1]{\marginpar{\textcolor{red}{\fbox{\Large #1:!}}}}
\newcommand{\myexplain}[1]{{\sffamily\small\color{red}{\noindent [Explanation:\medskip\newline \begin{quote}#1\hfill]\end{quote}}}\medskip}
\newcommand{\explain}[1]{{\sffamily\small\color{red}{#1}}\medskip}
\newcommand{\mynotes}[1]{}
\newcommand{\Authornote}[2]{}
\newcommand{\Authornotecolored}[3]{}
\newcommand{\Authorcomment}[2]{}
\newcommand{\Authorstartcomment}[1]{}
\newcommand{\Authorfnote}[2]{}
\newcommand{\Authormarginmark}[1]{}
\newcommand{\myexplain}[1]{}
\newcommand{\explain}[1]{}
\renewcommand{\myexplain}[1]{{\sffamily\small\color{red}{\noindent \begin{quote}{\bf Explanation:} \medskip\newline #1\end{quote}}}\medskip}
\newcommand{\Esymb}{\mathbb{E}}
\newcommand{\Psymb}{\vvmathbb{P}}
\DeclareMathOperator*{\E}{\Esymb}
\DeclareMathOperator*{\ProbOp}{\Psymb}
\renewcommand{\Pr}{\ProbOp}
\newcommand{\textparen}[1]{\text{(#1)}}
\newcommand{\because}[1]{\textparen{because #1}}
\renewcommand{\because}[1]{\textparen{because #1}}
\newcommand{\seteq}{\mathrel{\mathop:}=}
\newcommand{\bigmid}{~\big|~}
\newcommand{\Bigmid}{~\Big|~}
\newcommand\bdot\bullet
\newcommand{\Ind}{\mathbb I}
\newcommand{\Ind}{\mathds 1}
\newcommand{\R}{\vvmathbb R}
\newcommand{\cA}{\mathcal A}
\newcommand{\cC}{\mathcal C}
\newcommand{\cD}{\mathcal D}
\newcommand{\cE}{\mathcal E}
\newcommand{\cL}{\mathcal L}
\newcommand{\sfD}{\mathsf D}
\newcommand{\sfK}{\mathsf K}
\newcommand{\sfN}{\mathsf N}
\renewcommand{\leq}{\leqslant}
\renewcommand{\le}{\leqslant}
\renewcommand{\geq}{\geqslant}
\renewcommand{\ge}{\geqslant}
\let\epsilon=\varepsilon
\numberwithin{equation}{section}
\newcommand\MYcurrentlabel{xxx}
\newcommand{\MYstore}[2]{%
  \global\expandafter \def \csname MYMEMORY #1 \endcsname{#2}%
}
\newcommand{\MYload}[1]{%
  \csname MYMEMORY #1 \endcsname%
}
\newcommand{\MYnewlabel}[1]{%
  \renewcommand\MYcurrentlabel{#1}%
  \MYoldlabel{#1}%
}
\newcommand{\MYdummylabel}[1]{}
\newcommand{\torestate}[1]{%
  \let\MYoldlabel\label%
  \let\label\MYnewlabel%
  #1%
  \MYstore{\MYcurrentlabel}{#1}%
  \let\label\MYoldlabel%
}
\newcommand{\restatetheorem}[1]{%
  \let\MYoldlabel\label
  \let\label\MYdummylabel
  \begin{theorem*}[Restatement of \prettyref{#1}]
    \MYload{#1}
  \end{theorem*}
  \let\label\MYoldlabel
}
\newcommand{\restatelemma}[1]{%
  \let\MYoldlabel\label
  \let\label\MYdummylabel
  \begin{lemma*}[Restatement of \prettyref{#1}]
    \MYload{#1}
  \end{lemma*}
  \let\label\MYoldlabel
}
\newcommand{\restateprop}[1]{%
  \let\MYoldlabel\label
  \let\label\MYdummylabel
  \begin{proposition*}[Restatement of \prettyref{#1}]
    \MYload{#1}
  \end{proposition*}
  \let\label\MYoldlabel
}
\newcommand{\restatefact}[1]{%
  \let\MYoldlabel\label
  \let\label\MYdummylabel
  \begin{fact*}[Restatement of \prettyref{#1}]
    \MYload{#1}
  \end{fact*}
  \let\label\MYoldlabel
}
\newcommand{\restate}[1]{%
  \let\MYoldlabel\label
  \let\label\MYdummylabel
  \MYload{#1}
  \let\label\MYoldlabel
}
\newcommand{\addreferencesection}{
  \phantomsection
\ifnum\stocmode=0
  \addcontentsline{toc}{section}{References}
\else
  \addcontentsline{toc}{section}{References \hspace*{1in} --------- End of extended abstract ---------}
\fi

}
\newcommand{\e}{\epsilon}
\newcommand{\eps}{\epsilon}
\let\origparagraph\paragraph
\renewcommand{\paragraph}[1]{\vspace*{-2pt}\origparagraph{#1.}}
\let\pref=\prettyref
\newcommand{\diam}{\mathrm{diam}}
\newcommand{\dmid}{\,\|\,}
\newcommand{\vertiii}[1]{{\left\vert\kern-0.25ex\left\vert\kern-0.25ex\left\vert #1 
          \right\vert\kern-0.25ex\right\vert\kern-0.25ex\right\vert}}
\renewcommand{\Ind}{\vvmathbb{1}}
\DeclareMathOperator{\argmin}{\mathrm{argmin}}
\newcommand{\K}{\mathsf{K}}
\newcommand{\rt}{\vvmathbb{r}}
\newcommand{\depth}{\mathfrak{D}}
\newcommand{\tnorm}[1]{\left\|#1\right\|_{\ell_1(w)}}
\newcommand{\nabA}[1]{\nabla}
\newcommand{\scost}{\mathsf{S}}
\newcommand{\mcost}{\mathsf{M}}
\newcommand{\cost}{\mathrm{cost}}
\newcommand{\sfp}{\mathsf{p}}
\newcommand{\vvr}{\vvmathbb{r}}
\newcommand{\tD}{\tilde{\sfD}}
\begin{document}

\title{Pure entropic regularization for metrical task systems}
\author{Christian Coester \\ {\small University of Oxford} \and James R. Lee \\ {\small University of Washington}}
\date{}

\maketitle

\begin{abstract}
We show that on every $n$-point HST metric, there is a randomized online
algorithm for metrical task systems (MTS) that is $1$-competitive for service costs and $O(\log n)$-competitive for movement costs.
In general, these refined guarantees are optimal up to the implicit constant.
While an $O(\log n)$-competitive algorithm for MTS on HST metrics was developed in \cite{BCLL19},
that approach could only establish an $O((\log n)^2)$-competitive ratio
when the service costs are required to be $O(1)$-competitive.
Our algorithm can be viewed as an instantiation of
online mirror descent with the regularizer derived from a multiscale conditional entropy.

In fact, our algorithm satisfies a set of even more refined guarantees; we are able to exploit this
property to combine it with known random embedding theorems and obtain, for {\em any} $n$-point metric space,
a randomized algorithm that is $1$-competitive for service costs and $O((\log n)^2)$-competitive
for movement costs.
\end{abstract}

\section{Introduction}

Let $(X,d_X)$ be a finite metric space with $|X|=n > 1$. The Metrical Task
Systems (MTS) problem, introduced in \cite{BLS92} is described as follows.
The input is a sequence $\langle c_t : X \rightarrow \R_+ : t \geq 1\rangle$ of
nonnegative cost functions on the state space $X$.
At every time $t$, an online algorithm maintains
a state $\rho_t \in X$.

The corresponding cost is the sum of a {\em service cost} $c_t(\rho_t)$ and a
{\em movement cost} $d_X(\rho_{t-1}, \rho_t)$.
Formally, an {\em online algorithm} is a sequence of mappings
$\bm{\rho} = \langle \rho_1, \rho_2, \ldots, \rangle$
where, for every $t \geq 1$,
$\rho_t : (\R_+^X)^t \to X$ maps a sequence of cost functions $\langle c_1, \ldots, c_t\rangle$
to a state. The initial state $\rho_0 \in X$ is fixed. The {\em total cost of the algorithm $\bm{\rho}$ in servicing $\bm{c} = \langle c_t : t \geq 1\rangle$} is defined as:
\[
\cost_{\bm{\rho}}(\bm{c}) \seteq \sum_{t \geq 1} \left[c_t\!\left(\rho_t(c_1,\ldots, c_t)\right) + d_X\!\left(\rho_{t-1}(c_1,\ldots, c_{t-1}), \rho_t(c_1,\ldots, c_t)\right)\right].
\]
The cost of the {\em offline optimum}, denoted $\cost^*(\bm{c})$, is the infimum of
$\sum_{t \geq 1} [c_t(\rho_t)+d_X(\rho_{t-1},\rho_t)]$ over {\em any} sequence
$\llangle \rho_t : t \geq 1\rrangle$ of states.
A {\em randomized online algorithm} $\bm{\rho}$  is said to be {\em $\alpha$-competitive}
if for every $\rho_0 \in X$, there is a constant $\beta > 0$ such that for all
cost sequences $\bm{c}$:
\[
\E\left[\cost_{\bm{\rho}}(\bm{c})\right] \leq \alpha \cdot \cost^*(\bm{c}) + \beta\,.
\]

For the $n$-point uniform metric,
a simple coupon-collector argument shows that the competitive ratio is $\Omega(\log n)$, and this is tight \cite{BLS92}. A long-standing conjecture is that this $\Theta(\log n)$ competitive ratio holds for an arbitrary $n$-point metric space.
The lower bound has almost been established \cite{BBM06,BLMN05};
for any $n$-point metric space, the competitive ratio is $\Omega(\log n / \log \log n)$.
Following a long sequence of works (see, e.g., \cite{Sei99,BKRS00,BBBT97,Bar96,FM03,FRT04}),
an upper bound of $O((\log n)^2)$ was shown in \cite{BCLL19}.

\paragraph{Relation to adversarial multi-arm bandits}
MTS is naturally related to the adversarial setting of the
classical multi-arm bandits model in sequential decision making,
and provides a very general framework for ``bandits with switching costs.''
Unlike in the setting of regret minimization, where one competes
against the best static strategy in hindsight (see, e.g., \cite{BN12}),
competitive analysis compares the performance of an online algorithm
to the best {\em dynamical} offline algorithm.

Thus this model emphasizes the importance of
an adaptivity in the face of changing environments.
For MTS, the online algorithm has {\em full information}: access to
the complete cost function $c_t$ is available when deciding on a point $\rho_t(c_1,\ldots,c_t) \in X$
at which to play.
And yet
one of the fascinating relationships 
between MTS and adversarial bandits is the parallel between
adaptivity---being willing to ``try out'' new strategies---and
the classical exploration/exploitation tradeoff that
occurs in models where one only has access to partial information about
the loss functions.

\paragraph{HST metrics}
The methods of \cite{BBN12} show that the competitive ratio for MTS is $O(\log n)$
on weighted star metrics.  Recently, the authors of \cite{BCLL19} generalized this
result by designing an algorithm with competitive ratio $O(\depth_T \log n)$
on any weighted $n$-point tree metric with combinatorial depth $\depth_T$.
We now discuss a special class of metrics.

Let $T=(V,E)$ be a finite tree with root $\rt$ and vertex weights
$\{ w_u > 0 : u \in V\}$,
let $\cL \subseteq V$ denote the leaves of $T$,
and suppose that the vertex weights on $T$ are non-increasing along root-leaf paths.
Consider the metric space $(\cL,d_T)$, where $d_T(\ell,\ell')$ is the weighted length of the path connecting $\ell$ and $\ell'$ when the edge from a node $u$ to its parent is $w_u$.
We will use $\depth_T$ for the combinatorial (i.e., unweighted) depth of $T$.

$(\cL,d_T)$ is called an {\em HST metric} (or, equivalently
for finite metric spaces, an {\em ultrametric}).
If, for some $\tau > 1$, the weights on $T$ satisfy the stronger inequality $w_v \leq w_u/\tau$ whenever
$v$ is a child of $u$, the space $(\cL,d_T)$ is said to be a {\em $\tau$-HST metric.}
Such metric spaces play a special role in MTS since every $n$-point
metric space can be probabilistically approximated by a distribution over
such spaces \cite{Bar96,FRT04}.  Indeed, the $O((\log n)^2)$-competitive ratio
for general metric spaces established in \cite{BCLL19} is a consequence
of their $O(\log n)$-competitive algorithm for HSTs.

\subsection{Refined guarantees}

The authors of \cite{BBN10} observe that there is a more refined way to analyze
competive algorithms for MTS.
For a randomized online algorithm $\bm{\rho}$ and a cost sequence $\bm{c}$, we denote, respectively, $\scost_{\bm{\rho}}(\bm{c})$
and $\mcost_{\bm{\rho}}(\bm{c})$ for the (expected) service cost and movement cost, that is
\[
\scost_{\bm{\rho}}(\bm{c}) \seteq \E \sum_{t \geq 1} c_t(\rho_t) \quad \text{and}\quad \mcost_{\bm{\rho}}(\bm{c}) \seteq \E \sum_{t\geq1}d_X(\rho_{t-1}, \rho_t) \,.
\]
If there are numbers $\alpha,\alpha',\beta,\beta' > 0$ such that
for every cost $\bm{c}$, it holds that
\begin{align*}
\scost_{\bm\rho}(\bm{c}) &\leq \alpha \cdot \cost^*(\bm{c})+\beta \\
\mcost_{\bm\rho}(\bm{c}) &\leq \alpha' \cdot \cost^*(\bm{c})+\beta',
\end{align*}
one says that $\bm{\rho}$ is {\em $\alpha$-competitive for service costs} and {\em $\alpha'$-competitive for movement costs.}

In \cite{BBN10}, it is shown that on every $n$-point HST metric, and
for every $\e > 0$, there is an online algorithm that is simultaneously
$(1+\e)$-competitive for service costs and $O((\log (n/\e))^2)$-competitive for movement costs.
The authors of \cite{BCLL19} improve this slightly to show that actually
there is an online algorithm that is simultaneously $1$-competitive for service costs
and $O((\log n)^2)$-competitive for movement costs.
We obtain the optimal refined guarantees.

\begin{theorem} \label{thm:refined}
	On any $n$-point HST metric $X$, there
	is a randomized  online algorithm that is $1$-competitive for service costs and $O(\log n)$-competitive for movement costs.
\end{theorem}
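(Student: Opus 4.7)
The plan is to design and analyze a fractional online algorithm derived from online mirror descent on the simplex of leaf-distributions, with a carefully chosen multiscale conditional entropy regularizer; then to round the fractional trajectory to an integral one while preserving both refined guarantees up to constants.

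\paragraph{Setup}
First I would lift MTS on $(\cL,d_T)$ to its fractional relaxation: the state at time $t$ is a probability vector $x_t \in \Delta^\cL$ over leaves, the fractional service cost at time $t$ is $\langle c_t, x_t\rangle$, and the fractional movement cost between $x_{t-1}$ and $x_t$ is the tree-earthmover distance
\[
\dtnorm{x_t - x_{t-1}} = \sum_{u \in V \setminus \{\rt\}} w_u \,\bigabs{x_t(u) - x_{t-1}(u)},
\]
where $x(u) = \sum_{\ell \in \cL(u)} x(\ell)$ is the subtree mass. It is standard that an algorithm that is $\alpha$-competitive for service and $\alpha'$-competitive for movement on the fractional problem can be rounded online to an integral one with the same guarantees up to constants (via standard HST rounding / tree-embedding of distributions). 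So it suffices to produce such a fractional algorithm.

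\paragraph{Regularizer and mirror-descent dynamics}
The regularizer I would use is a \emph{pure} multiscale conditional entropy,
\[
\Phi(x) \;=\; \sum_{u \in V \setminus \cL} \eta_u \,x(u)\, H\!\Paren{\tfrac{x(v)}{x(u)} : v \text{ child of } u},
\]
with weights $\eta_u$ chosen proportional to $w_u$ (up to a universal constant). Unlike in \cite{BCLL19}, there is no additive $\ell_1$ or $x\log x$ ``un-conditional'' term; this is the ``purity'' that will save the extra $\log n$ factor.
The algorithm runs continuous-time mirror descent:
\[
\partial_t \nabla \Phi(x_t) \;=\; -c_t + \lambda_t \1,
\]
where $\lambda_t \in \R$ is a Lagrange multiplier maintaining $\sum_\ell x_t(\ell) = 1$ and $x_t(\ell) \ge 0$ (with the obvious reflection at the boundary of the simplex). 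Discretizing in time (or working directly in continuous time and then discretizing the analysis) gives the actual online update.

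\paragraph{Service-cost guarantee}
To show that the service cost is $1$-competitive, I would run a primal-dual / work-function argument: track the optimal offline work function $W_t(\ell)$, and use the dual of the mirror-descent step together with the fact that $\Phi$ is bounded on $\Delta^\cL$ to show
\[
\sum_t \langle c_t, x_t\rangle \;\leq\; \cost^*(\bm c) + O(\Phi(x_0) - \Phi(x_\infty)).
\]
The key structural fact is that the gradient direction chosen by mirror descent decreases service cost at least as fast as the true instantaneous cost, modulo an additive term controlled by $\Phi$, which is bounded independently of the cost sequence. This delivers a $1$-competitive service-cost bound with an additive constant $\beta$.

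\paragraph{Movement-cost guarantee}
This is the main obstacle. The goal is to bound
\[
\int \dtnorm{\dot x_t}\,dt \;\leq\; O(\log n)\cdot \cost^*(\bm c)+\beta'.
\]
The strategy is a potential-function comparison with an arbitrary offline trajectory $\rho^*_t$. Define a Bregman-divergence-type potential $\Psi_t = D_\Phi(\1_{\rho^*_t} \,\|\, x_t)$; standard mirror-descent manipulations give
\[
\dot\Psi_t \;\leq\; \langle c_t, \1_{\rho_t^*} - x_t\rangle + (\text{terms from offline movement}),
\]
while the \emph{movement speed} $\dtnorm{\dot x_t}$ must be related to the time derivative of $\Phi(x_t)$. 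Here the use of pure (rather than mixed) entropy is essential: for the conditional-entropy regularizer, the strong convexity at each internal node $u$ gives
\[
w_u\,|\dot x_t(u)| \;\lesssim\; \frac{1}{\log(1/x_t(u))}\, (\text{change in regularizer contribution at } u),
\]
and summing over scales yields a factor of only $O(\log n)$ rather than $O((\log n)^2)$. Combining with the potential inequality gives movement $\leq O(\log n)\cdot \cost^*(\bm c)$. The key technical lemma to prove is precisely this per-scale ``movement $\leq \log n \times$ regularizer change'' bound; all else is bookkeeping.

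\paragraph{Summary and rounding}
Once the two refined fractional guarantees are established, the fractional trajectory $x_t$ is rounded online to an integral state $\rho_t$ using any known HST rounding scheme (e.g., the standard tree-coupling / Kuhn--Munkres-style rounding), which is known to preserve $\alpha$-competitiveness for service cost and $O(1)$-competitiveness for movement cost relative to the fractional movement. This yields the claimed integral algorithm. The hardest point in the whole argument is the pure-entropy movement lemma sketched above; everything else is a direct instantiation of now-standard online mirror descent technology for MTS.
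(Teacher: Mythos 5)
Your high-level insight is correct and matches the paper's: replacing the ``unconditional'' multiscale entropy $\sum_u w_u\,\Pr[\cE_u]\log(1/\Pr[\cE_u])$ of \cite{BCLL19} with the chain-rule/conditional form $\sum_u w_u\,\Pr[\cE_u]\log(\Pr[\cE_{\sfp(u)}]/\Pr[\cE_u])$ is exactly what removes the extra $\log n$ factor, and ``purity'' in this sense is indeed the name of the game. But the plan as written has several genuine gaps that the paper's proof spends most of its pages addressing.

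First, your regularizer omits the noise parameters $\delta_u$ entirely. This is fatal: with a literally pure entropy, when a conditional probability $q_u = x_u/x_{\sfp(u)}$ is near $0$, the mirror-descent speed $\partial_t q_u \propto q_u(\cdots)$ vanishes, so the algorithm cannot respond to cost arriving at sparsely-occupied subtrees and is not finitely competitive at all. The paper's regularizer carries $\delta_u = \theta_u/\eta_u$ (with $\theta_u = |\cL_u|/|\cL_{\sfp(u)}|$), which turns the update into $\partial_t q_u \propto (q_u + \delta_u)(\cdots)$. This ``mixing with uniform'' is standard in primal-dual competitive analysis and is essential here. Relatedly, your learning rates $\eta_u \propto w_u$ are off: the weights already appear as a separate prefactor $w_u/\eta_u$, and $\eta_u$ must be tied to the branching, $\eta_u = 1 + \log(1/\theta_u)$, so that $\sum_{v\in\gamma_{\rt,\ell}} \eta_v = \depth_T(\ell) + \log n$ telescopes to $O(\log n)$ for a balanced tree.

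Second, once you add the $\delta_u$'s, the movement analysis becomes strictly harder than your ``strong-convexity gives $w_u|\dot x(u)| \lesssim$ regularizer change'' sketch suggests. The $\delta_u$-term injects extra movement in the regime $q_u \ll \delta_u$ that cannot be charged against service cost. The paper handles this with an additional auxiliary potential $\Psi(q) = -\sum_u \Delta(q)_u\,\sfD^{(u)}(\theta^{(u)}\,\|\,q^{(u)})$ (a measure of distance from the uniform conditional distribution); Lemmas 2.9 and 2.10 show the ``hybrid'' cost created by $\delta_u$ is absorbed by the decrease of $\Psi$. You have no analog of $\Psi$, and without it the movement bound does not close. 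You also implicitly need the preprocessing (Lemma 2.8) replacing $T$ by an equivalent HST of depth $\leq \log_2 n$, since the raw bound is $O(\depth_T + \log n)$, not $O(\log n)$. Finally, your service-cost argument gestures at a work-function approach, whereas the paper uses the global Bregman divergence $\tD(z\,\|\,q)$ against an arbitrary offline trajectory $z$; that choice is what makes Lemma 2.2 (the $O(1/\kappa)$-Lipschitzness in the offline variable) go through. In short: right regularizer \emph{shape}, but the $\delta_u$-noise, the $\eta_u$ choice, the auxiliary potential $\Psi$, and the depth reduction are all missing, and those are precisely where the theorem lives.
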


\begin{remark}[Optimality of the refined guarantees]
Any finitely competitive algorithm for MTS on an $n$-point uniform metric
cannot be better than $\Omega(\log n)$-competitive for movement costs,
regardless of its competitive ratio for service costs. This is because this lower bound holds even if the cost functions only take values $0$ and $\infty$. Moreover, it cannot be better than $1$-competitive for service costs, regardless of its competitive ratio for movement costs. To see this, consider the case where each cost function is the constant function $1$.
\end{remark}

\paragraph{Finely competitive guarantees}
Suppose that for some numbers $\alpha_0,\alpha_1,\gamma,\beta,\beta' > 0$,
a randomized online algorithm $\bm\rho$ satisfies, for every cost $\bm{c}$ and
{\em every} offline algorithm $\bm{\rho}^*$:
\begin{align}
   \scost_{\bm\rho}(\bm{c}) &\leq \alpha_0 \scost_{\bm\rho^*}(\bm{c}) + \alpha_1 \mcost_{\bm\rho^*}(\bm{c}) + \beta \label{eq:finesc}\\
   \mcost_{\bm\rho}(\bm{c}) &\leq \gamma \scost_{\bm{\rho}}(\bm{c}) + \beta'\,.\label{eq:finemc}
\end{align}
In this case, we say that {\em $\bm{\rho}$ is $(\alpha_0,\alpha_1,\gamma)$-finely competitive.}
We establish the following.

\begin{theorem}\label{thm:super}
   On any $n$-point HST metric $X$, for every $\kappa \geq 1$, there is an online randomized algorithm $\bm{\rho}$ that is
   $\left(1,1/\kappa,O(\kappa \log n)\right)$-finely competitive.
   In fact, one can take $\beta=0$ and $\beta' \leq O(\kappa \diam(X))$.
\end{theorem}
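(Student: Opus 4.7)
The plan is to design a fractional online algorithm via online mirror descent with a multiscale entropic regularizer scaled by $\kappa$, analyze it against any offline trajectory to establish both \eqref{eq:finesc} and \eqref{eq:finemc} at the fractional level, and then round to an integral randomized algorithm using the standard HST rounding. Concretely, associate every fractional state $x \in \Delta(\cL)$ with node masses $x(u) := \sum_{\ell \in \cL \cap T_u} x(\ell)$; the HST transportation cost between $x$ and $x'$ equals $\sum_{u \neq \rt} w_u\,|x(u)-x'(u)|$, and the fractional service cost is $\langle c_t, x\rangle$. For $\kappa \geq 1$, define the regularizer
\[
   \Phi_\kappa(x) := \kappa \sum_{u \in V \setminus \{\rt\}} w_u \sum_{v \in \mathrm{children}(u)} x(v)\,\log \frac{x(v)}{x(u)},
\]
a weighted sum of conditional negative entropies, one per scale. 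Starting from the uniform distribution $x_0$, the online algorithm is the one-step Bregman update
\[
   x_t := \argmin_{x \in \Delta(\cL)}\Big\{\langle c_t, x\rangle + D_{\Phi_\kappa}(x \dmid x_{t-1})\Big\},
\]
which decouples across scales into a weighted multiplicative-weights update on every conditional child-distribution.

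For the service cost bound \eqref{eq:finesc}, I would compare $x_t$ to an arbitrary offline fractional trajectory $x_t^*$ using the standard OMD inequality derived from the Bregman projection: $\langle c_t, x_t - x_t^*\rangle \leq D_{\Phi_\kappa}(x_t^* \dmid x_{t-1}) - D_{\Phi_\kappa}(x_t^* \dmid x_t)$. Summing over $t$ telescopes most of the right-hand side, leaving the residual $\sum_t \bigl[D_{\Phi_\kappa}(x_{t+1}^* \dmid x_t) - D_{\Phi_\kappa}(x_t^* \dmid x_t)\bigr]$, which by a mean-value estimate scales like $(1/\kappa) \sum_u w_u\,|x_{t+1}^*(u) - x_t^*(u)|$, that is, exactly $(1/\kappa)\,\mcost_{\bm\rho^*}(\bm c)$. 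The boundary term $D_{\Phi_\kappa}(x_1^* \dmid x_0)$ is dominated by the final $-D_{\Phi_\kappa}(x_\infty^* \dmid x_\infty)$, delivering $\beta = 0$ and hence the clean inequality $\scost_{\bm\rho} \leq \scost_{\bm\rho^*} + (1/\kappa)\,\mcost_{\bm\rho^*}$.

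For the movement cost bound \eqref{eq:finemc}, the KKT condition for the Bregman projection reads $c_t + \nabla \Phi_\kappa(x_t) - \nabla \Phi_\kappa(x_{t-1}) = \lambda_t \mathbf{1}$ on the support of $x_t$, with inequality elsewhere. Since $\nabla \Phi_\kappa$ at coordinate $\ell$ sums contributions $\kappa w_u \log(x(u)/x(u'))$ over ancestor pairs $u' = \mathrm{parent}(u)$, a unit of fractional movement of $x_t$ at scale $u$ is paid for by $\Omega(\kappa w_u / \log n)$ service cost at step $t$; summing over $t$ and telescoping along root-leaf paths yields $\mcost_{\bm\rho} \leq O(\kappa \log n)\,\scost_{\bm\rho} + O(\kappa \diam(X))$, where the additive term absorbs the initial entropy $\Phi_\kappa(x_0)$. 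Finally, apply the standard HST rounding to convert the fractional algorithm into a randomized integral algorithm, preserving service costs exactly in expectation and movement costs up to a constant.

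The main obstacle is simultaneously obtaining $\alpha_0 = 1$ with $\beta = 0$ and the sharp $(1/\kappa, \kappa \log n)$ trade-off. The first point requires a \emph{pure} entropic regularizer, with no additive perturbation or smoothing of the kind that forced $(1+\epsilon)$-competitive service costs in \cite{BBN10} or $O((\log n)^2)$-competitive movement in \cite{BCLL19}. The second point forces the scale weighting of $\Phi_\kappa$ to match $w_u$ exactly so that Bregman divergences against the offline trajectory telescope against offline movement at the right rate at every scale. Verifying that cross-scale interactions in the multiscale entropy do not leak slack into either coefficient, especially on highly unbalanced HSTs, is the delicate technical point that must be handled carefully.
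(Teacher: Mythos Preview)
Your proposal has a genuine gap in the service-cost analysis. You assert that the residual $\sum_t [D_{\Phi_\kappa}(x_{t+1}^* \dmid x_t) - D_{\Phi_\kappa}(x_t^* \dmid x_t)]$ is bounded ``by a mean-value estimate'' by $(1/\kappa)$ times the offline movement. But for the pure conditional entropy you wrote down, the gradient in the first argument contains terms of the form $w_u \log(z_v/z_u)$, which blow up as the offline conditional $z_v/z_u \to 0$. So $z \mapsto D_{\Phi_\kappa}(z \dmid x)$ is \emph{not} Lipschitz in $\|\cdot\|_{\ell_1(w)}$, and the mean-value step fails whenever the offline trajectory visits (or approaches) the boundary of the simplex, which it certainly does. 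This is not a technicality: it is exactly the obstruction the paper overcomes by inserting the noise $\delta_u$ and the per-node learning rates $\eta_u$ into the regularizer \eqref{eq:reg1}, which together force $\tfrac{1}{\eta_v}\bigl|\log\frac{p_v+\delta_v}{q_v+\delta_v}\bigr| \leq 2$ uniformly and hence give the required $O(1/\kappa)$-Lipschitz bound (\pref{lem:lipschitz}). Your stated belief that a ``pure'' regularizer with no smoothing is what delivers $\alpha_0=1$ is backwards; the paper uses smoothing and still gets $\alpha_0=1$ with $\beta=0$ (the latter simply because the offline starts at the online's initial state, so the initial divergence vanishes). Also, as written your $\kappa$ is on the wrong side: scaling $\Phi$ by $\kappa$ makes the divergence $\kappa$-Lipschitz, not $(1/\kappa)$-Lipschitz.

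The movement bound is also where the real work lies, and your sketch does not engage with it. Once noise is present, the mirror-descent dynamics \eqref{eq:evo} incur extra movement in the regime $x_u/x_{\sfp(u)} \ll \delta_u$ that cannot be charged directly to service cost. The paper handles this via an auxiliary potential $\Psi(q) = -\sum_u \Delta(q)_u\, \sfD^{(u)}(\theta^{(u)} \dmid q^{(u)})$ measuring distance from the uniform conditionals, and the key \pref{lem:crucial} shows this potential absorbs the ``hybrid'' cost term that the noise creates. Your one-line claim that ``a unit of fractional movement at scale $u$ is paid for by $\Omega(\kappa w_u/\log n)$ service cost'' is precisely the statement that needs proving, and it does not follow from the KKT condition alone; the telescoping you allude to along root-leaf paths is what yields the $\depth_T + \log n$ factor in \eqref{eq:mv2}, and getting $\log n$ rather than $\depth_T \log n$ requires the specific choice $\eta_u = 1 + \log(1/\theta_u)$.
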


Combined with the random embedding from \cite{FRT04}, this yields the following consequence
for general $n$-point metric spaces.

\begin{corollary}
   On any $n$-point metric space, there is an online randomized algorithm that is
   $1$-competitive for service costs and $O((\log n)^2)$-competitive for movement costs.
\end{corollary}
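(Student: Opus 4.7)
The plan is to compose the fine competitive guarantee of \pref{thm:super} with the FRT random embedding~\cite{FRT04}, which provides a distribution over HSTs $(X, d_T)$ with leaves identified with $X$ satisfying $d_T \geq d_X$ pointwise and $\E_T[d_T(x,y)] \leq O(\log n)\cdot d_X(x,y)$ for all $x,y \in X$. Given $(X, d_X)$, the algorithm samples an HST $(X, d_T)$ from this distribution and then runs on $(X, d_T)$ the algorithm $\bm\rho$ guaranteed by \pref{thm:super}, with fineness parameter $\kappa \defeq C\log n$ for a sufficiently large absolute constant $C$. The state space and cost functions are unchanged by the embedding; only the movement metric differs, and I write $\mcost^X$ and $\mcost^T$ accordingly.

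For service costs, fix a cost sequence $\bm c$ and let $\bm\sigma^*$ be an offline optimum for $(X, d_X)$, so $\scost_{\bm\sigma^*}(\bm c) + \mcost^X_{\bm\sigma^*}(\bm c) = \cost^*_X(\bm c)$. Viewing $\bm\sigma^*$ as an offline algorithm on $(X, d_T)$ with the same state trajectory, apply inequality~\pref{eq:finesc} and take expectation over $T$:
\[
   \E_T\bigl[\scost_{\bm\rho}(\bm c)\bigr] \leq \scost_{\bm\sigma^*}(\bm c) + \tfrac{1}{\kappa}\,\E_T\!\left[\mcost^T_{\bm\sigma^*}(\bm c)\right] \leq \scost_{\bm\sigma^*}(\bm c) + \tfrac{O(\log n)}{\kappa}\,\mcost^X_{\bm\sigma^*}(\bm c)\,,
\]
where the second inequality uses the FRT expected-distortion property. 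With $C$ chosen so that $O(\log n)/\kappa \leq 1$, the right-hand side is bounded by $\cost^*_X(\bm c)$, giving $1$-competitiveness for service costs.

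For movement costs, use the dominance $d_X \leq d_T$ and then inequality~\pref{eq:finemc}:
\[
   \E_T\bigl[\mcost^X_{\bm\rho}(\bm c)\bigr] \leq \E_T\bigl[\mcost^T_{\bm\rho}(\bm c)\bigr] \leq O(\kappa \log n)\,\E_T\bigl[\scost_{\bm\rho}(\bm c)\bigr] + O(\kappa)\,\E_T\bigl[\diam_T(X)\bigr]\,.
\]
The standard FRT construction satisfies $\diam_T(X) = O(\diam(X))$ almost surely, so the last term is $O((\log n)\diam(X))$, a quantity depending only on $X$. Together with $\E_T[\scost_{\bm\rho}(\bm c)] \leq \cost^*_X(\bm c)$ from the previous paragraph, this yields $\E_T[\mcost^X_{\bm\rho}(\bm c)] \leq O((\log n)^2)\cdot \cost^*_X(\bm c) + O((\log n)\diam(X))$, proving $O((\log n)^2)$-competitiveness for movement costs.

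The only delicate point is the calibration $\kappa = \Theta(\log n)$: this is exactly large enough that the $1/\kappa$ factor in~\pref{eq:finesc} absorbs the $O(\log n)$ FRT distortion, preserving the coefficient $1$ in front of $\cost^*_X$ for service costs, while keeping $\kappa\log n = O((\log n)^2)$ so the movement bound remains quadratic. Without the \emph{fine} (as opposed to merely refined) guarantee of \pref{thm:super}, this balancing would not be possible, since a mere $1$-competitive-for-service/$O(\log n)$-competitive-for-movement bound on HSTs does not give control over how service-cost guarantees degrade when an adversary exploits $d_T$-movement to drive up the offline $T$-cost.
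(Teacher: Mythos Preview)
Your proof is correct and follows essentially the same approach as the paper: sample an FRT tree, set $\kappa = \Theta(\log n)$ so that the $1/\kappa$ factor in \pref{eq:finesc} cancels the $O(\log n)$ expected distortion, and then use \pref{eq:finemc} together with $d_X \leq d_T$ to bound movement. You are in fact slightly more careful than the paper about the additive terms (tracking $\beta' = O(\kappa\,\diam_T(X))$ explicitly and noting that $\diam_T(X) = O(\diam(X))$ almost surely under FRT), and your closing remark on why the \emph{fine} guarantee is essential is a nice clarification.
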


\begin{proof}
Consider an $n$-point metric space $(X,d_X)$.  It is known \cite{FRT04} that there
exists a random HST metric $(T,d_T)$ so that $\cL(T)=X$ and for all $x,y \in X$:
\begin{enumerate}
   \item $\Pr[d_T(x,y) \geq d_X(x,y)]=1$,
   \item $\E[d_T(x,y)] \leq D\cdot d_X(x,y)$,
\end{enumerate}
and $D \leq O(\log n)$.

Let $\bm\rho_T$ be the randomized algorithm for $(T,d_T)$ guaranteed by \pref{thm:super} with $\kappa=D$.
Let $\bm\rho$ denote the algorithm that results from sampling $(T,d_T)$ and then using $\bm\rho_T$.
We use $\mcost^T$ to denote movement cost measured in $d_T$ and $\mcost^X$ for movement cost measured in $d_X$.

Then for any cost $\bm{c}$ and any offline algorithm $\bm\rho^*$, we have
\begin{align*}
   \scost_{\bm\rho}(\bm{c})  = \E[\scost_{\bm\rho_T}(\bm{c})] &\leq \scost_{\bm\rho^*}(\bm{c}) + \kappa^{-1} \E[\mcost^T_{\bm\rho^*}(\bm{c})] + O(1)\\
&\leq \scost_{\bm\rho^*}(\bm{c}) + \kappa^{-1} D \mcost^X_{\bm\rho^*}(\bm{c}) + O(1) \\
&= \scost_{\bm\rho^*}(\bm{c}) + \mcost^X_{\bm\rho^*}(\bm{c}) + O(1)\,,
\end{align*}
and
\begin{align*}
   \mcost_{\bm{\rho}}^X(\bm{c}) = \E[\mcost^X_{\bm\rho_T}(\bm{c})] \leq \E[\mcost^T_{\bm\rho_T}(\bm{c})] \leq O(\kappa \log n) \E[\scost_{\bm\rho_T}(\bm{c})]\,+ O(1),
\end{align*}
completing the proof.
\end{proof}

\subsection{The fractional model on trees}

We will work in the following deterministic fractional setting, which is equivalent to the randomized integral setting described earlier (see \cite[\S 2]{BCLL19}). The state of a fractional algorithm is given by a point in the polytope
\begin{align}\label{eq:mtsK}
   \K_T &\seteq \left\{ x \in \R_+^{V} : x_{\rt} = 1,\ x_u = \sum_{v \in \chi(u)} x_v \quad \forall u \in V \setminus \cL\right\},
\end{align}
where we use $\chi(u)$ for the set of children of $u$ in $T$. For $u \ne\rt$, we will also write $\sfp(u)$ for the parent of $u$ in $T$.

A state $x\in\K_T$ corresponds to the situation that the state of a randomized integral algorithm is a leaf descendant of $u$ with probability $x_u$. Note that $\K_T$ is simply an affine encoding of the probability simplex on $\cL$. In the fractional setting, changing from state $x$ to $x'$ incurs movement cost $\tnorm{x-x'}$, where
\[
\tnorm{z} \seteq \sum_{u \in V} w_u |z_u|
\]
denotes the weighted $\ell_1$-norm on $\R^{V}$.

\subsection{Mirror descent, metric filtrations, and regularization}

Following \cite{BCLL19}, our algorithm is based on the mirror descent framework as established in \cite{BCLLM18}.
This is a method for regularized online convex optimization,
an approach that was previously explored for competitive analysis in \cite{ABBS10,BCN14}.

A central component of mirror descent is choosing the appropriate mirror map
(which we will often refer to as the ``regularizer'').  This is a strictly convex function
$\Phi : \K_T \to \R$ that endows $\K_T$ with a geometric (Riemannian) structure,
specifying how to perform constrained vector flow.  In other words,
it specifies how one can move in a preferred direction while remaining inside $\K_T$.

The paper \cite{BCLL19} employs the following regularizer:
\begin{equation}\label{eq:reg0}
\Phi_0(x) \seteq \frac{1}{\eta} \sum_{u \in V \setminus \{\rt\}} w_u \left(x_u+\delta_u\right) \log \left(x_u+\delta_u\right)\,,
\end{equation}
with $\eta =\Theta(\log |\cL|)$ and $\delta_u = |\cL_u|/|\cL|$, where $\cL_u$ is the set of leaves in the subtree rooted at $u$.

\subsubsection{Metric filtrations}
It is straightforward that one can think of $\Phi_0$ as a type of multiscale entropy (this is the
{\em negative} of the associated Shannon entropy, since we use the analyst's convention
that the entropy is convex).
To understand this notion, let us forget momentarily the weights on $T$.  Then
the structure of $T$ gives a natural filtration over probability measures on the leaves $\cL$.
Suppose that $\bm{X}$ is a random variable taking values in $\cL$ and,
for $u \in V$, denote by $\cE_u$ the event $\{\bm{X} \in \cL_u\}$.
Then the chain rule for Shannon entropy yields
\[
   \sum_{\ell\in\cL} \Pr[\cE_\ell]\log \frac{1}{\Pr[\cE_{\ell}]} = 
   \sum_{u\in V\setminus\{\rt\}}
   \Pr[\cE_u] \log \frac{\Pr[\cE_{\sfp(u)}]}{\Pr[\cE_u]}.
\]

If we now imagine that uncertainty at higher scales is more costly
than uncertainty at lower scales, then
we might define an analogous {\em weighted} entropy by
\begin{equation}\label{eq:weighted-ent}
   \sum_{u\in V\setminus\{\rt\}}
   w_u \Pr[\cE_u] \log \frac{\Pr[\cE_{\sfp(u)}]}{\Pr[\cE_{u}]}.
\end{equation}
Such a notion is natural in the context of ``metric learning'' problems.

Ignoring the $\{\delta_u\}$ values for a moment, consider that \eqref{eq:reg0} is not analogous
to \eqref{eq:weighted-ent}.  Indeed, it corresponds to the quantity
\begin{equation}\label{eq:badent}
\sum_{u\in V\setminus\{\rt\}}w_u
\Pr[\cE_u] \log \frac{1}{\Pr[\cE_u]},
\end{equation}
and now one can see a fundamental reason why the algorithm associated to \eqref{eq:reg0}
only achieves an $O(\depth_T \log n)$ competitive ratio, where $\depth_T$ is the combinatorial
depth of $T$:  The quantity \eqref{eq:badent} {\em overmeasures} the metric uncertainty.

Suppose that $\bm{X}$ is a uniformly random leaf.  Then
$\sum_{\ell\in\cL} \Pr[\cE_\ell] \log \frac{1}{\Pr[\cE_{\ell}]} = \log n$, where $n=|\cL|$.
But, in general, one could have $\sum_{u \in V} \Pr[\cE_u] \log \frac{1}{\Pr[\cE_u]} \geq \Omega(\depth_T \log n)$.
This fact was not lost on the authors of \cite{BCLL19},
but they bypass the problem by combining mirror descent on stars with a recursive
composition method called ``unfair gluing.''

\subsubsection{Multiscale conditional entropy}

We employ a regularizer that is a more faithful analog of \eqref{eq:weighted-ent}:
\begin{equation}\label{eq:reg1}
\Phi(x) \seteq \sum_{u \in V \setminus \{\rt\}} \frac{w_u}{\eta_u} \left(x_u + \delta_u x_{\sfp(u)}\right) \log \left(\frac{x_u}{x_{\sfp(u)}}+\delta_u\right),
\end{equation}
where $\sfp(u)$ denotes the parent of $u$.

If one ignores the additional parameters $\{\eta_u \geq 1,\delta_u > 0\}$, this is precisely the negative
weighted Shannon entropy written according to the chain rule.
Here, we set
\begin{align}
\theta_u &\seteq \frac{|\cL_u|}{|\cL_{\sfp(u)}|} \label{eq:thetadef} \\
\eta_u &\seteq 1+\log(1/\theta_u) \label{eq:etadef} \\
\delta_u &\seteq \theta_u/\eta_u\,. \label{eq:deltadef}
\end{align}

The numbers $\{\theta_u\}$ are the conditional probabilites of the uniform distribution on leaves.
The $\{\delta_u\}$ values
are employed as ``noise'' added to the entropy calculation.
Such noise is a fundamental aspect for competitive analysis, and distinguishes it from the application
of mirror descent to regret minimization problems (see, e.g., \cite{BN12}).\footnote{One finds aspects of this ``mixing with the uniform distribution''
	in the bandits setting as well, but used for variance reduction, a seemingly very different purpose.}
The effect of these noise parameters appears ubiquitously in applications of the primal-dual method to competitive analysis (see \cite{BN07}),
and manifests itself as an additive term in the update rules (see \pref{eq:evo} below). Intuitively, it ensures that the conditional probability $\frac{x_u}{x_{\sfp(u)}}$ is updated fast enough even when it is close to $0$.

Finally, the numbers $\{\eta_u : u \in V\}$ are commonly referred to as ``learning rates'' in 
the study of online learning.  They represent the rate at which information is discounted in the resulting algorithm;
for MTS, this corresponds to the relative importance of costs arriving now vs. costs that arrived in the past.

\subsubsection{The dynamics}
\label{sec:dyn}

We will derive in \pref{sec:derivation} the following \emph{continuous time} evolution of the resulting mirror descent algorithm $\left(x(t) \in \K_T : t \in [0,\infty)\right)$ for a cost path $c\colon[0,\infty)\to \R_+^{\cL}$:
\begin{equation}\label{eq:evo}
\partial_t \left(\frac{x_u(t)}{x_{\sfp(u)}(t)}\right) = \frac{\eta_u}{w_u} \left(\frac{x_u(t)}{x_{\sfp(u)}(t)}+\delta_u\right)
\left(\beta_{\sfp(u)}(t) - \sum_{\ell \in \cL_u} \frac{x_{\ell}(t)}{x_u(t)} c_{\ell}(t)\right)
\end{equation}
Here, $\beta_{\sfp(u)}(t)$ is a Lagrangian multiplier that ensures conservation of conditional probability:
\[
\sum_{v \in \chi(\sfp(u))} \partial_t \left(\frac{x_v(t)}{x_{\sfp(u)}(t)}\right) = 0\,.
\]
One can see that the evolution is being driven by the expected instantaneous
cost incurred conditioned on the current state being in the subtree rooted at $u$.

One should interpret \eqref{eq:evo} only when $x(t)$ lies in the relative interior of $\K_T$. Otherwise,
the conditional probabilities are ill-defined.
One way to rectify this is to
prevent $x(t)$ from hitting the relative boundary of $\K_T$ at all.
It is possible to adaptively modify the cost functions by a suitably small perturbation so as to guarantee
this property and, at the same time, ensure that the total discrepancy between
the modified and true service cost
is a small additive constant.

Instead, we will follow a different approach, by extending the dynamics
to an analogous system of conditional probabilities
$\{q_u(t) : u \in V \setminus \{\rt\}\}$:
\begin{equation}\label{eq:star-dynamics}
  \partial_t q_u(t) = \frac{\eta_u}{w_u} \left(q_u(t)+\delta_u\right) \left(\beta_{\sfp(u)}(t) - \hat{c}_u(t) + \alpha_u(t)\right),
\end{equation}
where $q_u(t) = \frac{x_u(t)}{x_{\sfp(u)}(t)}$ whenever $x_{\sfp(u)}(t) > 0$, $\alpha_u(t)$ is a Lagrangian multiplier for the constraint $q_u(t)\ge 0$, and $\hat{c}_u(t)$ is the ``derived'' cost
in the subtree rooted at $u$:
\begin{align*}
\hat{c}_u(t) &\seteq \sum_{\ell \in \cL_u} q_{\ell\mid u}(t) c_{\ell}(t)\\
q_{\ell \mid u}(t) &\seteq \prod_{v \in \gamma_{u,\ell} \setminus \{u\}} q_v(t)\,,\nonumber
\end{align*}
where $\gamma_{u,\ell}$ is the unique simple $u$-$\ell$ path in $T$.

Stated this way, the mirror descent algorithm can be envisioned as
running a ``weighted star'' algorithm on the conditional
probabilities at every internal node of $T$, with the derived
costs at an internal node $u$ given by the average cost of the current strategy
for playing one unit of mass in the subtree rooted at $u$.

In the next section, we will implement and analyze a discretization of \eqref{eq:star-dynamics}
using Bregman projections.
Since our regularizer $\Phi$ and convex body $\K_T$ do not satisfy
the assumptions underlying the existence and uniqueness theorem of \cite{BCLLM18},
we need to construct a solution to \eqref{eq:star-dynamics}
and, indeed, taking the discretization parameter in our algorithm to zero,
one establishes a solution of bounded variation; see \pref{sec:continuous}.

The major benefit of the formulations \eqref{eq:evo} and \eqref{eq:star-dynamics}
is in motivating such an algorithm and prescribing the derived costs.
In \pref{sec:derivation}, we describe how these dynamics
can be predicted from the definition \eqref{eq:reg1}.

\section{The MTS algorithm}\label{sec:main}

We will first establish some generic machinery which, at this point, is not specific to MTS yet. Consider a convex polytope 
$\sfK_0\subseteq \R^n$, define $\sfK \seteq \sfK_0 \cap \R_+^n$, and assume that $\sfK$ is compact.
Suppose additionally that $\Phi : \cD \to \R$
is differentiable and strictly convex in an open neighborhood $\cD\supseteq \sfK$.

Let us write $\sfD_{\Phi}$ for the corresponding Bregman divergence
\[
   \sfD_{\Phi}(y \dmid x) \seteq \Phi(y) - \Phi(x) - \llangle \nabla \Phi(x), y-x\rrangle,
\]
which is non-negative due to convexity of $\Phi$.
Then for $x,y,z \in \K$, we have:
\begin{equation}\label{eq:s1}
   \sfD_{\Phi}(z \dmid y) - \sfD_{\Phi}(z \dmid x) = -\Phi(y)+\Phi(x)-\langle \nabla \Phi(y), z-y\rangle+\langle \nabla \Phi(x),z-x\rangle.
\end{equation}

For a vector $c \in \R^n$ and $x \in \K$, define the projection
\[
   \Pi_{\sfK}^c (x) \seteq \argmin \left\{ \sfD_{\Phi}(y \dmid x) + \langle c,y\rangle : y \in \sfK \right\}.
\]
Since $\K$ is compact and $\Phi$ is strictly convex, there is a unique minimizer $y^* \in \K$.

For $x \in \K$, recall the definition of the normal cone at $x$:
\[\sfN_{\sfK}(x) = \left\{ p \in \R^n : \langle p, y-x\rangle \leq 0 \textrm{ for all } y \in \K \right\}.\]
Given a representation of $\K$ by inequality constraints, $\K=\{x\in\R^n\colon Ax\le b\}$ for $A\in\R^{m\times n}$ and $b\in\R^n$, it holds
\begin{align*}
\sfN_{\sfK}(x) = \{A^Ty\colon y\ge0\text{ and }y^T(Ax-b)=0\}.
\end{align*}
The KKT conditions yield
\begin{equation}\label{eq:KKT}
   \nabla \Phi(y^*) = \nabla \Phi(x) - c - \lambda^*\,,
\end{equation}
where $\lambda^* \in \sfN_{\sfK}(y^*)$.
Since $\sfN_{\sfK}(y^*) = \sfN_{\K_0}(y^*) + \sfN_{\R_+^n}(y^*)$, we can
can decompose $\lambda^* = \beta - \alpha$ 
with $\beta \in \sfN_{\sfK_0}(y^*)$ and $- \alpha \in \sfN_{\R_+^n}(y^*)$.
In particular, we have $\alpha \geq 0$ and
$\alpha_i > 0 \implies y^*_i = 0$ for every $i=1,\ldots,n$.

Substituting this into \eqref{eq:s1} gives
\begin{align*}
   \sfD_{\Phi}(z \dmid y^*) - \sfD_{\Phi}(z \dmid x) &= -\Phi(y^*)+\Phi(x)+\langle \nabla \Phi(x),y^*-x\rangle +
   \langle c-\alpha + \beta, z-y^* \rangle \\
   &\leq -\sfD_{\Phi}(y^* \dmid x) + \langle c-\alpha, z-y^*\rangle,
\end{align*}
where the inequality comes from $\langle \beta, z-y^*\rangle \leq 0$ since $z \in \sfK$ and $\beta \in \sfN_{\sfK}(y^*)$.
We have proved the following.

\begin{lemma}\label{lem:bp}
   For any $x, z \in \sfK$, and $c \in \R^n$, let $y^* = \Pi_{\K}^c(x)$ and $\lambda^*$
   be as in \eqref{eq:KKT}.
   Then for any $\alpha \in -\sfN_{\R^+_n}(y^*)$ such that $\lambda^* + \alpha \in \sfN_{\K_0}(y^*)$,
   it holds that
   \[
      \sfD_{\Phi}(z \dmid y^*) - \sfD_{\Phi}(z \dmid x) \leq \langle c-\alpha, z-y^*\rangle.
   \]
\end{lemma}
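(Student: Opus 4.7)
The plan is to derive the bound by a single substitution into the three-point Bregman identity \eqref{eq:s1} followed by discarding two non-positive contributions. First I would set $y = y^*$ in \eqref{eq:s1}, giving
\[
\sfD_\Phi(z\dmid y^*) - \sfD_\Phi(z\dmid x) = -\Phi(y^*) + \Phi(x) - \langle \nabla\Phi(y^*), z - y^*\rangle + \langle \nabla\Phi(x), z - x\rangle.
\]
Then I would substitute the KKT relation $\nabla\Phi(y^*) = \nabla\Phi(x) - c - \lambda^*$ from \eqref{eq:KKT} into the third term on the right. Writing the hypothesised decomposition as $\lambda^* = \beta - \alpha$ with $\beta \seteq \lambda^* + \alpha \in \sfN_{\sfK_0}(y^*)$, and collecting the $x$-and-$y^*$-only terms into the Bregman divergence $\sfD_\Phi(y^* \dmid x)$, the identity rearranges to
\[
\sfD_\Phi(z\dmid y^*) - \sfD_\Phi(z\dmid x) = -\sfD_\Phi(y^* \dmid x) + \langle c - \alpha + \beta,\, z - y^*\rangle.
\]

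Next, I would drop the two non-positive contributions on the right. The term $-\sfD_\Phi(y^* \dmid x)$ is non-positive by convexity of $\Phi$. For the term $\langle \beta, z - y^*\rangle$, I would use the hypothesis $\beta \in \sfN_{\sfK_0}(y^*)$ together with the observation that $z \in \sfK \subseteq \sfK_0$ is an admissible test vector; by the very definition of the normal cone, $\langle \beta, z - y^*\rangle \le 0$. What remains of the right-hand side is precisely $\langle c - \alpha, z - y^*\rangle$, which is the desired bound.

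There is no substantive obstacle here; the calculation is routine algebra once the KKT identity is in hand. The one point worth flagging is that the hypothesis only asks $\alpha$ to satisfy $\alpha \in -\sfN_{\R_+^n}(y^*)$ and $\lambda^* + \alpha \in \sfN_{\sfK_0}(y^*)$, rather than identifying $\alpha$ with the particular vector produced by decomposing $\lambda^*$ along $\sfN_{\sfK_0}(y^*) + \sfN_{\R_+^n}(y^*)$. This causes no trouble because the proof never uses any property of $\alpha$ per se: the only feature invoked is that $\beta = \lambda^* + \alpha$ lies in $\sfN_{\sfK_0}(y^*)$. This added flexibility is what will later let one identify the $\alpha$ in the lemma with the Lagrange multipliers $\alpha_u(t)$ from the dynamics \eqref{eq:star-dynamics}.
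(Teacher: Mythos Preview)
Your proof is correct and matches the paper's argument essentially line for line: substitute the KKT relation into the three-point identity \eqref{eq:s1}, recognize the $-\sfD_\Phi(y^*\dmid x)$ term, and drop it together with $\langle\beta,z-y^*\rangle\le 0$ using $z\in\sfK\subseteq\sfK_0$ and $\beta\in\sfN_{\sfK_0}(y^*)$. Your closing remark about the flexibility in the choice of $\alpha$ is a useful clarification but does not alter the logic.
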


\subsection{Iterative Bregman projections}
\label{sec:iter}

We describe now a discretization of the algorithm from the introduction. This discretization will mimic the continuous dynamics if the entries of each individual cost vector are small. We can achieve this by splitting each cost vector into several copies of scaled down versions of itself, as discussed in \pref{sec:alg}. In \pref{sec:continuous}, we will give a formal argument that this indeed yields a discretization of the continuous dynamics from the introduction.

Fix a tree $T$ and recall the definition of $\K_T$ from \eqref{eq:mtsK}.
Let $Q_{T}$ denote the collection of vectors $q \in \R_+^{V \setminus \{\rt\}}$ such that
for all $u \in V \setminus \cL$,
\[
	\sum_{v \in \chi(u) } q_v = 1.
\]
For $q \in Q_T$ and $u \in V \setminus \cL$, we use $q^{(u)} \in \R_+^{\chi(u)}$ to denote the vector
defined by $q^{(u)}_v \seteq q_v$ for $v \in \chi(u)$, and define
the corresponding probability simplex
$Q^{(u)}_T \seteq \{ q^{(u)} : q \in Q_T \}$.
We will use $\Delta : Q_T \to \K_T$ for the map which sends $q \in Q_T$ to the (unique) $x = \Delta(q) \in \K_T$ such that
\[
   x_v = x_u q_v \qquad \forall u \in V \setminus \cL, v \in \chi(u).
\]
Note that $q$ contains more information than $x$; the map $\Delta$ fails to be invertible whenever 
there is some $u \in V \setminus \cL$ with $x_u = 0$.

Fix $\kappa \ge 1$. On the open domain $\cD^{(u)}=(-\min_{v\in\chi(u)}\delta_v,\infty)^{\chi(u)}$, for $\delta_v$ as given in \eqref{eq:deltadef}, define the strictly convex function $\Phi^{(u)} : \cD^{(u)} \to \R$ by
\[
   \Phi^{(u)}(p) \seteq \frac{1}{\kappa} \sum_{v \in \chi(u)} \frac{w_v}{\eta_v} \left(p_v + \delta_v\right) \log \left(p_v + \delta_v\right).
\]
Denote the corresponding Bregman divergence on $Q^{(u)}_T$ by
\[
   \sfD^{(u)}\!\left(p \dmid p' \right) = \frac{1}{\kappa} \sum_{v \in \chi(u)} \frac{w_v}{\eta_v} \left[\left(p_v + \delta_v\right) \log \frac{p_v+\delta_v}{p'_v+\delta_v}
+ p'_v-p_v\right].
\]

We now define an algorithm that takes a point $q \in Q_T$ and a cost vector $c \in \R_+^{\cL}$ and outputs
a point $p = \cA(q,c)\in Q_T$.
Fix $\langle u_1,u_2,\ldots, u_N\rangle$ a topological ordering of $V \setminus \cL$
such that every child in $T$ occurs before its parent.
We define $p$ inductively as follows.
Let $\hat{c}_{\ell} \seteq c_{\ell}$ for $\ell \in \cL$.
For every $j=1,2,\ldots,N$:
\begin{align}
   \hat{c}^{(u_j)}_v &\seteq \hat{c}_v \qquad \forall v \in \chi(u_j) \label{eq:hatc}\\
   p^{(u_j)} &\seteq \argmin \left\{ \sfD^{(u_j)}\!\left(p \dmid q^{(u_j)}\right) + \llangle p, \hat{c}^{(u_j)} \rrangle \bigmid 
   p \in Q^{(u_j)}_T\right\} \label{eq:locp} \\
   \hat{c}_{u_j} &\seteq \sum_{v \in \chi(u_j)} p^{(u_j)}_v\,\hat{c}_v \label{eq:hatc2}
\end{align}
Let $\alpha^{(u_j)}$ be
the vector of Lagrange multipliers corresponding to the nonnegativity constraints in \eqref{eq:locp}
(recall \pref{lem:bp}).
One should note that in this setting (a probability simplex), the nonnegativity multipliers are unique
and thus well-defined.

We denote $\alpha = \alpha^{q,c} \in \R_+^{V}$ as the vector given by $\alpha_v \seteq \alpha^{(\sfp(v))}_v$ for $v \neq \rt$
and $\alpha_{\rt} \seteq 0$.
Recall the complementary slackness conditions:
\begin{equation}\label{eq:posmult}
   \alpha_v > 0 \implies p_v = 0.
\end{equation}
For $v \in \chi(u)$, calculate
\[
   \left(\nabla \Phi^{(u)}(p)\right)_v = \frac{1}{\kappa} \frac{w_v}{\eta_v} \left(1+\log(p_v+\delta_v)\right).
\]
Then using \eqref{eq:KKT}, we can write the algorithm as follows:
\begin{quote}
\begin{tabbing}
   For \=$j=1,2,\ldots,N$: \\[0.1cm]
   \>For \=$v \in \chi(u_j)$: \\[0.1cm]
   \>\>$p^{(u_j)}_v \seteq (q^{(u_j)}_v+\delta_v) \exp\left(\kappa \frac{\eta_v}{w_v} \left(\beta_{u_j} - (\hat{c}_v - \alpha_v)\right)\right)- \delta_v$, \\[0.1cm]
   \>$\hat{c}_{u_j} \seteq \sum_{v \in \chi(u_j)} p_v^{(u_j)} \hat{c}_v$.
\end{tabbing}
\end{quote}
where $\beta_{u_j}\ge 0$ is the multiplier for the constraint $\sum_{v \in \chi(u_j)} q^{(u_j)}_v \geq 1$. There is no multiplier for the constraint $\sum_{v \in \chi(u_j)} q_v^{(u_j)} \leq 1$ because this constraint will be satisfied automatically and is therefore not needed in \eqref{eq:locp}: If it were violated, decreasing some $p_v$ with $p_v>q_v^{(u_j)}$ would yield a strictly better solution to the minimization problem \eqref{eq:locp}.

\subsection{The global divergence}

For $z \in \K_T$ and $q \in Q_T$, 
define the global divergence function
\[
   \tD(z \dmid q) \seteq \frac1{\kappa} \sum_{u \notin \cL} \sum_{v \in \chi(u)} \frac{w_v}{\eta_v} \left[ \left(z_v + \delta_v z_u\right) \log \left(\frac{\frac{z_v}{z_u}+\delta_v}{q_v+\delta_v}\right)
      + z_u q_v - z_v
\right],
\]
with the convention that $0 \log \left(\frac{0}{0}+\delta_v\right) = \lim_{\e \to 0} \e \log \left(\frac{0}{\e}+\delta_v\right) = 0$. We remark that $\tD$ is the Bregman divergence associated to the regularizer \pref{eq:reg1} (divided by $\kappa$) when $\frac{x_v}{x_{u}}$ is replaced by $q_v$. We will use $\tD$ as a potential function to prove inequality \pref{eq:finesc}. Here, $z$ denotes the configuration of the offline algorithm. Note that while the online configuration is encoded by $q\in Q_T$, we still use the polytope $\sfK_T$ to encode the offline configuration. This will be more convenient when expressing the offline movement cost.

The next lemma shows that when the offline algorithm moves, the change in potential is bounded by $O(1/\kappa)$ times the offline movement cost.

\begin{lemma}\label{lem:lipschitz}
      It holds that for any $q \in Q_T$ and $z,z' \in \sfK_T$,
      \[
         \left|\tD(z \dmid q) - \tD(z' \dmid q)\right| \leq \frac{1}{\kappa} \left(2+\frac{4}{\tau}\right) \left\|z-z'\right\|_{\ell_1(w)}.
\]
\end{lemma}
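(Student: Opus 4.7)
The plan is to bound $\tD(\cdot \dmid q)$ as a Lipschitz function by controlling its partial derivatives and integrating along a straight-line path. First I would reindex the outer sum by the child $v$ (setting $u=\sfp(v)$) and write
\[
\tD(z \dmid q) = \frac{1}{\kappa}\sum_{v \ne \rt} \frac{w_v}{\eta_v}\, g_v(z_v, z_{\sfp(v)}),
\qquad g_v(a,b) = (a + \delta_v b)\log\!\left(\frac{a/b + \delta_v}{q_v + \delta_v}\right) + b q_v - a.
\]
By continuity of $\tD(\cdot \dmid q)$ on $\sfK_T$, it suffices to prove the estimate for $z, z'$ in the relative interior, where all coordinates are strictly positive. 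Along the segment $z(t) = (1-t)z + tz'$, which then stays in the interior, the chain rule gives
\[
\tD(z' \dmid q) - \tD(z \dmid q) = \int_0^1 \sum_{v \in V}(z'_v - z_v)\, \partial_{z_v}\tD(z(t)\dmid q) \dt,
\]
where $\partial_{z_v}$ treats the coordinates $\{z_v\}_{v \in V}$ as independent. Since $z_\rt \equiv 1$ the root drops out, so it remains to bound $|\partial_{z_v}\tD| \le \tfrac{w_v}{\kappa}(2 + 4/\tau)$ for every non-root $v$.

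Each variable $z_v$ appears as the ``child'' in $g_v$ and as the ``parent'' in $g_w$ for every $w \in \chi(v)$. Direct differentiation (after the cancellation $(z_v + \delta_v z_u)/(z_v + \delta_v z_u) = 1$ in the self-term) produces
\[
\partial_{z_v} g_v = \log\!\left(\frac{z_v/z_{\sfp(v)} + \delta_v}{q_v + \delta_v}\right),
\qquad \partial_{z_v} g_w = \delta_w \log\!\left(\frac{z_w/z_v + \delta_w}{q_w + \delta_w}\right) - \frac{z_w}{z_v} + q_w.
\]
For the self-contribution, $z_v/z_{\sfp(v)}, q_v \in [0,1]$ gives $|\partial_{z_v} g_v| \le \log(1/\delta_v) + \log 2$; substituting $\delta_v = \theta_v/\eta_v$ and using $\log \eta_v \le \eta_v - 1$ yields $\log(1/\delta_v) = \log \eta_v + \eta_v - 1 \le 2\eta_v$, so the weighted contribution is $\tfrac{w_v}{\eta_v}\cdot 2\eta_v = 2 w_v$.

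The main obstacle is the children sum $\sum_{w \in \chi(v)} \tfrac{w_w}{\eta_w}\partial_{z_v} g_w$, which in principle has unboundedly many terms. The key is to combine the $\tau$-HST inequality $w_w \le w_v/\tau$ (which, together with $\eta_w \ge 1$, gives $w_w/\eta_w \le w_v/\tau$) with the three normalizations $\sum_{w \in \chi(v)} z_w/z_v = 1$, $\sum_{w \in \chi(v)} q_w = 1$, and $\sum_{w \in \chi(v)} \theta_w = 1$ (the last since $\theta_w = |\cL_w|/|\cL_v|$ and $\cL_v = \bigsqcup_w \cL_w$). The $-z_w/z_v$ and $+q_w$ pieces each contribute at most $w_v/\tau$ in absolute value. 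For the $\delta_w\log$ piece, I would estimate $\delta_w|\log(\cdot)| \le \delta_w(1 + \log(1/\delta_w)) = \theta_w + \tfrac{\theta_w\log\eta_w}{\eta_w} \le (1 + 1/e)\theta_w \le 2\theta_w$, whence $\sum_w (w_w/\eta_w)\cdot 2\theta_w \le (2 w_v/\tau)\sum_w \theta_w = 2 w_v/\tau$. Summing the three pieces yields $4w_v/\tau$, and adding the self-term gives the required $(2 + 4/\tau)w_v$. Substituting this into the integral representation and using $\|z - z'\|_{\ell_1(w)} = \sum_v w_v|z_v - z'_v|$ then completes the proof.
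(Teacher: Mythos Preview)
Your proposal is correct and follows essentially the same approach as the paper: both arguments bound the partial derivative $\partial_{z_u}\tD$ by $\tfrac{w_u}{\kappa}(2+4/\tau)$ via the same split into a ``self'' term (controlled by $\tfrac{1}{\eta_u}\log\tfrac{1+\delta_u}{\delta_u}\le 2$) and a ``children'' sum (controlled using $w_v\le w_u/\tau$ together with $\sum_{v\in\chi(u)}\delta_v\le 1$, $\sum_{v}q_v=1$, and $\sum_v z_v/z_u\le 1$). The only cosmetic difference is that the paper enlarges the domain to $\{z>0:\sum_{v\in\chi(u)}z_v\le z_u\}$ so it can vary one coordinate at a time, whereas you stay on the segment in $\sfK_T$ and appeal to continuity at the boundary.
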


\begin{proof}
   Consider a differentiable map $z\colon[0,1]\to \R_{++}^V$ such that $\sum_{v\in\chi(u)}z_v(t)\le z_u(t)$ for each $t$ and $u\notin\cL$.
   It suffices to show that for each $t$ and every fixed $q \in Q_T$,
   \[
      \kappa \left|\partial_t \tD(z(t) \dmid q)\right| \leq \left(2+\frac{4}{\tau}\right) \left\|z'(t)\right\|_{\ell_1(w)}.
   \]
   Moreover, it suffices to address the case when there is at most one $u \in V$ with $z'_u(t) \neq 0$.

   A direct calculation gives
   \begin{align}
      \kappa \partial_t \tD(z(t) \dmid q) &= \frac{w_u}{\eta_u} z'_u(t) \log \left(\frac{z_u(t)/z_{\sfp(u)}(t)+\delta_u}{q_u+\delta_u}\right) \nonumber \\
                                   &\qquad 
                           + \sum_{v \in \chi(u)} \frac{w_v}{\eta_v} 
                                                   \left[\delta_v z'_u(t) \log \left(\frac{z_v(t)/z_u(t)+\delta_v}{q_v+\delta_v}\right)
                                                         + z'_u(t) \left(q_v - \frac{z_v(t)}{z_u(t)}\right)
                                                      \right].\label{eq:tD}
   \end{align}
   Let us now use definitions \eqref{eq:etadef} and \eqref{eq:deltadef}
   to observe that
   \[
      \frac{1}{\eta_v} \left|\log \frac{p_v+\delta_v}{q_v+\delta_v}\right| \leq \frac{1}{\eta_v} \log \frac{1+\delta_v}{\delta_v} \leq 2.
   \]
   Using this in \eqref{eq:tD} yields
   \[
      \kappa \left|\partial_t \tD(z(t) \dmid q)\right| \leq w_u |z_u'(t)| \left(2 + \frac{1}{\tau} \sum_{v \in \chi(u)}
      \left(2\delta_v + \left|q_v-\frac{z_v(t)}{z_u(t)}\right|\right)\right)
      \leq w_u |z_u'(t)| \left(2+\frac{4}{\tau}\right),
   \]
   where the last inequality uses $\sum_{v \in \chi(u)} \delta_v \leq \sum_{v \in \chi(u)} \theta_v \leq 1$ and $\sum_{v\in\chi(u)}z_v(t)\le z_u(t)$.
\end{proof}

We will sometimes implicitly restrict vectors $x \in \R^{V}$ to the subspace spanned by $\{ e_{\ell} : \ell \in \cL \}$.
In this case, we employ the notation
\[
   \langle x,y \rangle_{\cL} \seteq \sum_{\ell \in \cL} x_{\ell} y_{\ell},
\]
when either vector lies in $\R^V$ or $\R^{\cL}$.

According to the following lemma, the change in potential due to movement of the online algorithm is bounded by the difference in service cost between the offline and online algorithm.
\begin{lemma}\label{lem:sc}
   For any cost vector $c \in \R_+^{\cL}$, $z \in \K_T$, and $q \in Q_T$, it holds that if $p = \cA(q,c)$, then
   \[
      \tD(z \dmid p) - \tD(z \dmid q) \leq \llangle c, z - \Delta(p)\rrangle_{\cL}.
   \]
\end{lemma}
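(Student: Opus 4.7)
The plan is to prove the lemma by decomposing the global divergence $\tD$ into a weighted sum of the local Bregman divergences $\sfD^{(u)}$ that appear in the iterative projection step \eqref{eq:locp}, applying \pref{lem:bp} at each internal node, and then telescoping the resulting inner products using the recursive definition of the derived costs $\hat{c}_u$.

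First, I would verify the following key identity: for any $z \in \K_T$ and $q \in Q_T$,
\[
   \tD(z \dmid q) = \sum_{u \in V \setminus \cL} z_u \cdot \sfD^{(u)}\!\left(r^{(u)} \dmid q^{(u)}\right),
\]
where $r^{(u)} \in Q_T^{(u)}$ is defined by $r^{(u)}_v \seteq z_v/z_u$ when $z_u > 0$ (and the whole summand is interpreted as $0$ when $z_u = 0$, which is consistent with the convention in the definition of $\tD$ and with the fact that the constraint $z_u = \sum_{v\in\chi(u)} z_v$ forces $z_v = 0$ for all descendants of $u$). This identity follows by matching terms, using $z_v = z_u r^{(u)}_v$ and factoring out $z_u$ from each summand in the definition of $\tD$.

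Next, for each $u \in V \setminus \cL$, I would apply \pref{lem:bp} to the local projection at $u$: taking $\sfK = Q_T^{(u)}$, $x = q^{(u)}$, $y^* = p^{(u)}$, and $z = r^{(u)}$, and letting $\alpha^{(u)}$ be the Lagrange multipliers for the nonnegativity constraints, \pref{lem:bp} yields
\[
   \sfD^{(u)}\!\left(r^{(u)} \dmid p^{(u)}\right) - \sfD^{(u)}\!\left(r^{(u)} \dmid q^{(u)}\right) \leq \bigiprod{\hat{c}^{(u)} - \alpha^{(u)},\, r^{(u)} - p^{(u)}}.
\]
Multiplying by $z_u$, summing over $u \in V \setminus \cL$, and using the decomposition above then gives
\[
   \tD(z \dmid p) - \tD(z \dmid q) \leq \sum_{u \notin \cL} z_u \bigiprod{\hat{c}^{(u)}, r^{(u)} - p^{(u)}} - \sum_{u \notin \cL} z_u \bigiprod{\alpha^{(u)}, r^{(u)} - p^{(u)}}.
\]
The $\alpha$-term is non-positive: by complementary slackness \eqref{eq:posmult}, $\alpha^{(u)}_v > 0$ implies $p^{(u)}_v = 0$, so $\alpha^{(u)}_v (r^{(u)}_v - p^{(u)}_v) = \alpha^{(u)}_v r^{(u)}_v \geq 0$ in every coordinate.

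The remaining task, and the most bookkeeping-heavy step, is to show that the first sum telescopes to $\iprod{c, z - \Delta(p)}_{\cL}$. Substituting $z_u r^{(u)}_v = z_v$, the sum equals
\[
   \sum_{u \notin \cL} \sum_{v \in \chi(u)} \hat{c}_v z_v - \sum_{u \notin \cL} z_u \sum_{v \in \chi(u)} p^{(u)}_v \hat{c}_v = \sum_{v \neq \rt} \hat{c}_v z_v - \sum_{u \notin \cL} z_u \hat{c}_u,
\]
where the equality uses the defining recurrence $\hat{c}_u = \sum_{v\in\chi(u)} p^{(u)}_v \hat{c}_v$ from \eqref{eq:hatc2}. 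Splitting $V \setminus \{\rt\} = \cL \cup (V \setminus \cL \setminus \{\rt\})$ and cancelling the internal-node contributions leaves $\sum_{\ell \in \cL} \hat{c}_\ell z_\ell - \hat{c}_{\rt} z_{\rt} = \iprod{c,z}_{\cL} - \hat{c}_{\rt}$. Finally, unrolling the recurrence $\hat{c}_u = \sum_{v\in\chi(u)} p^{(u)}_v \hat{c}_v$ all the way down to the leaves gives $\hat{c}_{\rt} = \sum_{\ell\in\cL} c_\ell \prod_{v \in \gamma_{\rt,\ell}\setminus\{\rt\}} p_v = \iprod{c,\Delta(p)}_{\cL}$, completing the proof. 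The main obstacle is just checking that the telescoping and index manipulation are correct; the core structural insight, namely the decomposition of $\tD$ into local divergences aligned with the iterative projections of the algorithm, is what makes the estimate tight.
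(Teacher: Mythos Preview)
Your proof is correct and follows essentially the same route as the paper: decompose $\tD$ as $\sum_{u\notin\cL} z_u\,\sfD^{(u)}(z^{(u)}\dmid q^{(u)})$, apply \pref{lem:bp} at each internal node, multiply by $z_u$ and sum, then drop the nonnegative $\alpha$-contribution and telescope the $\hat c$ terms via \eqref{eq:hatc2}. The only cosmetic difference is that you separate and discard the $\alpha$-term before telescoping, whereas the paper carries $\hat c - \alpha$ through the expansion and drops $\sum_v \alpha_v z_v \ge 0$ at the end; both are the same argument.
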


\begin{proof}
   Fix $q \in Q_T$ and $c \in \R_+^{\cL}$.
   Let $\alpha = \alpha^{q,c}$ denote the vector of multipliers
   defined in \pref{sec:iter}.
For $u \in V \setminus \cL$ with $z_u > 0$, define $z^{(u)} \in Q^{(u)}_T$ by
\[
   z^{(u)}_v \seteq \frac{z_v}{z_u}. 
\]
Then \pref{lem:bp} gives
\[
   \sfD^{(u)}\left(z^{(u)} \dmid p^{(u)}\right) - \sfD^{(u)}\left(z^{(u)} \dmid q^{(u)}\right) \leq \llangle \hat{c}^{(u)} - \alpha^{(u)}, 
   z^{(u)} - p^{(u)}\rrangle_{\chi(u)},
\]
where we use $\langle \cdot,\cdot\rangle_{\chi(u)}$ for the standard inner product on $\R^{\chi(u)}$.
Multiplying by $z_u$ and summing yields
\begin{align*}
   \tD(z \dmid p) - \tD(z \dmid q) &\leq \sum_{u \notin \cL} z_u \llangle \hat{c}^{(u)} - \alpha^{(u)}, z^{(u)} - p^{(u)} \rrangle_{\chi(u)} \\
                                         &= \sum_{u \notin \cL} \sum_{v \in \chi(u)} (\hat{c}^{(u)}_v -\alpha^{(u)}_v) z_v - \sum_{u \notin \cL} z_u \sum_{v \in \chi(u)} (\hat{c}^{(u)}_v-\alpha^{(u)}_v) p_v\,.
\end{align*}
Note that from \eqref{eq:posmult}, the latter expression is
\[
  \sum_{u \notin \cL} z_u \sum_{v \in \chi(u)} \hat{c}_v^{(u)} p_v
  \stackrel{\eqref{eq:hatc2}}{=}\sum_{u \notin \cL} z_u \hat{c}_u.
\]
Noting that $\hat{c}_{\vvr} = \sum_{\ell \in \cL} \Delta(p)_{\ell} c_{\ell}$,  this gives
\[
   \tD(z \dmid p) - \tD(z \dmid q) \leq \sum_{u \neq \rt} (\hat{c}_u - \alpha_u) z_u - \sum_{u \notin \cL} z_u \hat{c}_u
   \leq \llangle c, z-\Delta(p)\rrangle_{\cL}. \qedhere
\]
\end{proof}

\subsection{Algorithm and competitive analysis}
\label{sec:alg}

Let us now outline the proof of inequality \pref{eq:finemc}.
First, we perform a standard reduction that allows us to
bound only the ``positive'' movement costs when the algorithm
moves from $x$ to $y$.
Its proof is straightforward.

\begin{lemma}\label{lem:height}
	For $x,y\in \sfK_T$ it holds that
	\[
	\|x-y\|_{\ell_1(w)} = 2 \left\|(x-y)_{+}\right\|_{\ell_1(w)} + [\psi(y)-\psi(x)],
	\]
	where $\psi(x) \seteq \sum_{u \neq \vvr} w_u x_u$ for $x \in \sfK_T$.
\end{lemma}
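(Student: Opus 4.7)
The plan is to reduce the identity to the coordinatewise elementary fact $|a| = 2(a)_+ - a$ applied over $V$ and then use the structure of $\sfK_T$ (namely $x_{\rt}=1$ for all $x\in \sfK_T$) to identify the linear residual with $\psi(y)-\psi(x)$.

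Concretely, I would start by writing
\[
\|x-y\|_{\ell_1(w)} = \sum_{u \in V} w_u |x_u - y_u| = 2\sum_{u \in V} w_u (x_u-y_u)_+ - \sum_{u \in V} w_u (x_u - y_u),
\]
using $|a|=2(a)_+ - a$ in each coordinate. The first sum on the right is exactly $2\|(x-y)_+\|_{\ell_1(w)}$. For the second sum, the point is that $x,y\in \sfK_T$ forces $x_{\rt} = y_{\rt} = 1$, so the contribution of the root vanishes, and
\[
\sum_{u \in V} w_u (x_u - y_u) = \sum_{u \neq \rt} w_u (x_u - y_u) = \psi(x) - \psi(y),
\]
by the definition of $\psi$. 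Substituting yields the claimed identity.

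There is no real obstacle here; the only thing to double-check is that the $(x-y)_+$ term is insensitive to whether one sums over $V$ or $V\setminus\{\rt\}$, which holds because $(x_{\rt}-y_{\rt})_+=0$. The broader interpretation, which motivates the statement, is that on $\sfK_T$ the weighted $\ell_1$ mass is conserved modulo the potential $\psi$, so any transportation of mass from $x$ to $y$ must, up to the telescoping height change $\psi(y)-\psi(x)$, be accounted for in equal measure by upward and downward movement---hence the factor of two in front of $\|(x-y)_+\|_{\ell_1(w)}$. This is what subsequently lets one bound total movement by bounding only the "positive" direction.
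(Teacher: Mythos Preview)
Your proof is correct and is exactly the intended argument; the paper itself omits the proof, calling it ``straightforward,'' and your coordinatewise use of $|a|=2(a)_+-a$ together with $x_{\rt}=y_{\rt}=1$ is precisely what is implicit there.
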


We now state the key technical lemma
which controls the positive movement cost by
the service cost.  To this end, we employ an auxiliary potential function $\Psi : Q_T \to \R$ defined by
\begin{align*}
\Psi_u(q) &\seteq - \Delta(q)_u\,\sfD^{(u)}\!\left(\theta^{(u)} \dmid q^{(u)}\right) \\
\Psi(q) &\seteq \sum_{u \notin \cL} \Psi_u(q).
\end{align*}
Intuitively, $\Psi(q)$ is a measure of difference between the online configuration $q$ and the uniform distribution over leaves (whose conditional probabilities are given by $\theta$).

Let us give a brief explanation of the need for $\Psi$.
Our addition of ``noise'' to the multiscale conditional entropy
is to achieve the smoothness property established in \pref{lem:lipschitz}.
But this has the adverse effect of increasing the movement cost
of the algorithm, as one can see from the $\delta_u$ term in \pref{eq:evo}.
This additional movement cannot be easily charged against the service cost in
the regime where the noise term is dominant:  $\frac{x_u(t)}{x_{\sfp(u)}(t)} \ll \delta_u$.
On the other hand, this additional movement has the effect of further decreasing $\frac{x_u(t)}{x_{\sfp(u)}(t)}$,
which drives the conditional probabilities at $\sfp(u)$ away from the uniform distribution,
decreasing $\Psi$.  A formal statement appears later in \pref{lem:crucial}.

For the next two results, take any $q \in Q_T$ and cost $c \in \R_+^{\cL}$, and denote $p=\cA(q,c), x =\Delta(q),y=\Delta(p)$.

\begin{lemma}[Movement analysis]
	\label{lem:movement}
	It holds that
	\[
	\frac{\tau-3}{\kappa \tau} \left\|\left(x-y\right)_+\right\|_{\ell_1(w)} \leq (2 \depth_T + \log n) \langle c, x\rangle_{\cL} +
	\left[\Psi(q)-\Psi(p)\right].
	\]
\end{lemma}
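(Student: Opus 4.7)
The plan is to establish the inequality infinitesimally in the continuous-time limit of the algorithm (justified formally in \pref{sec:continuous}). Setting $\sigma_u(t) \seteq w_u\bigl(-\partial_t x_u(t)\bigr)_+$ for the positive movement rate at level $u$ and $\Sigma(t) \seteq \sum_{u\ne\rt}\sigma_u(t)$, the goal is to prove a pointwise version of the lemma and integrate in $t$. The key tools are the local update rule \eqref{eq:star-dynamics}, the product rule $\partial_t x_v = x_u\,\partial_t q_v + q_v\,\partial_t x_u$ for $v\in\chi(u)$, and the $\tau$-HST property $w_v \le w_u/\tau$ to close a geometric recursion across levels.

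Working at a fixed internal node $u$ and child $v \in \chi(u)$, the dynamics together with $\alpha_v \geq 0$ gives $(-\partial_t q_v)_+ \leq \tfrac{\eta_v}{w_v}(q_v + \delta_v)(\hat{c}_v - \beta_u)_+$. Combined with the product rule, this yields $w_v(-\partial_t x_v)_+ \le \eta_v\,x_u(q_v+\delta_v)(\hat{c}_v-\beta_u)_+ + (w_v/w_u)\,q_v\,\sigma_u$; summing over $v\in\chi(u)$ and then $u$, using $w_v\le w_u/\tau$ and $\sum_v q_v = 1$, produces
\[
\Sigma \;\leq\; \sum_{u\notin\cL} x_u \sum_{v\in\chi(u)} \eta_v(q_v + \delta_v)(\hat{c}_v-\beta_u)_+ \;+\; \tfrac{1}{\tau}\,\Sigma.
\]
Now split the main term using $\eta_v(q_v+\delta_v) = \eta_v q_v + \theta_v$. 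The $\eta_v q_v$-piece is bounded by $\sum_u x_u \sum_v \eta_v q_v\,\hat{c}_v$ (valid since $\beta_u\ge 0$), and after expanding $\hat{c}_v = \sum_{\ell\in\cL_v}q_{\ell\mid v}c_\ell$ and exchanging sums it rewrites as $\sum_{\ell\in\cL} x_\ell c_\ell \sum_{v\in\gamma_{\rt,\ell}\setminus\{\rt\}}\eta_v$. This is at most $(\depth_T+\log n)\,\langle c,x\rangle_{\cL}$, because $\prod_{v\in\gamma_{\rt,\ell}\setminus\{\rt\}}\theta_v = 1/n$ and $\eta_v = 1+\log(1/\theta_v)$ telescope along each root-to-leaf path.

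The remaining $\theta_v$-piece, namely $\sum_u x_u \sum_{v\in\chi(u)}\theta_v(\hat{c}_v-\beta_u)_+$, is matched against $-\partial_t\Psi$. The crucial identity is obtained by differentiating $\Psi_u = -x_u\,\sfD^{(u)}\bigl(\theta^{(u)} \dmid q^{(u)}\bigr)$: plugging \eqref{eq:star-dynamics} into $\partial_t\sfD^{(u)}$, using $\sum_v(q_v-\theta_v)\beta_u = 0$ to insert the Lagrange multiplier, and invoking complementary slackness $\alpha_v q_v = 0$, one finds
\[
\kappa\,\partial_t\sfD^{(u)}\bigl(\theta^{(u)}\dmid q^{(u)}\bigr) \;=\; \sum_{v\in\chi(u)}(q_v-\theta_v)\bigl(\hat{c}_v-\beta_u\bigr) \;-\; \sum_v \theta_v\,\alpha_v.
\]
Consequently, $-x_u\,\partial_t\sfD^{(u)}$ absorbs the required $\tfrac{x_u}{\kappa}\sum_v \theta_v(\hat{c}_v-\beta_u)_+$ (the $q_v$-piece overshoots in the favorable direction, and the $\alpha_v$ term has the right sign), while the remaining contribution $-\partial_t x_u\cdot\sfD^{(u)}$ to $\partial_t\Psi_u$ is controlled by $\sigma_u$ together with the $\tau$-HST factor.

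The main obstacle is bookkeeping the constants: the $\tau$-HST bound must be used twice, once to close the recursion $\Sigma \le C + \Sigma/\tau$ and once to absorb $-\partial_t x_u\cdot\sfD^{(u)}$ into the $\Psi$-budget, which is the source of the $\tau-3$ numerator rather than $\tau-1$. Two secondary technicalities are: (i) at nodes where $q_v=0$ the conditional probabilities degenerate and the differential computation must be interpreted through the limiting construction of \pref{sec:continuous}; and (ii) the algorithm's definition uses $\hat{c}_u = \sum_v p_v^{(u)}\hat{c}_v$ rather than $\sum_v q_v^{(u)}\hat{c}_v$, a discrepancy that vanishes in the continuous limit but contributes discretization error that must be absorbed into the same $\Psi$-budget in the non-limit argument.
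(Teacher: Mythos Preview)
Your approach mirrors the paper's closely: the same level-by-level recursion closed via the $\tau$-HST property, the same split $\eta_v(q_v+\delta_v)=\eta_v q_v+\theta_v$, the same telescoping $\sum_{v\in\gamma_{\rt,\ell}}\eta_v=\depth_T(\ell)+\log n$, and the same use of $\Psi$ to pay for the ``hybrid'' $\theta$-piece. The difference is that the paper carries out the entire argument directly in discrete time, via the exact KKT identity \eqref{eq:kkt2} and the elementary bound $1+t\le e^t$ inside \pref{lem:crucial}, whereas you differentiate in $t$ and intend to integrate. Since \pref{lem:movement} is stated for a single Bregman-projection step $p=\cA(q,c)$ with no smallness assumption on $c$, the continuous argument together with an appeal to \pref{sec:continuous} does not prove it: that section only shows the discrete iterates converge to a continuous solution, not that an inequality proved for the limit transfers back to each discrete step. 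Your own acknowledgment (ii) that the discrepancy between $\sum_v p_v\hat c_v$ and $\sum_v q_v\hat c_v$ ``must be absorbed'' is precisely the missing discrete work; the paper handles this via $q_v-p_v=(q_v+\delta_v)(1-e^{\sigma_v})$ and \pref{lem:crucial}.

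Two smaller points. First, your displayed identity has a sign slip: from $\kappa\,\partial_t\sfD^{(u)}=\sum_v(q_v-\theta_v)(\beta_u-\hat c_v+\alpha_v)$ one gets, after using $\sum_v(q_v-\theta_v)=0$ and $q_v\alpha_v=0$, that $\kappa\,\partial_t\sfD^{(u)}=-\sum_v(q_v-\theta_v)(\hat c_v-\beta_u)-\sum_v\theta_v\alpha_v$, not $+\sum_v(q_v-\theta_v)(\hat c_v-\beta_u)$. Second, bounding $(-\partial_t q_v)_+$ by $(\hat c_v-\beta_u)_+$ rather than by $\hat c_v-\alpha_v$ (which follows since $\beta_u\ge0$ and $\alpha_v\le\hat c_v$) makes the match with $-\partial_t\Psi$ less transparent; the paper's choice of $\hat c_v-\alpha_v$ is what allows the hybrid-cost inequality \eqref{eq:hybrid} to close exactly, yielding the extra $\sum_v x_v\hat c_v\le\depth_T\langle c,x\rangle_{\cL}$ term and hence the coefficient $2\depth_T+\log n$.
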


This lemma will be proved in \pref{sec:movement-analysis}.
Let us first see that it can be used to establish bounds on the competitive ratio.
Define $w_{\min} \seteq \min \{ w_{\ell} : \ell \in \cL \}$
and
\[
\eps_T \seteq \frac{w_{\min}}{2 (2\depth_T+\log n)} \frac{\tau-3}{\tau\kappa}.
\]

\begin{theorem}\label{thm:analysis}
	For any $z \in \sfK_T$:
	\begin{align}
	\langle c, y\rangle_{\cL} &\leq \langle c,z\rangle_{\cL} + \left[\tD(z \dmid q) - \tD(z \dmid p)\right] \label{eq:sc} \\
	\kappa^{-1} \left\|x-y\right\|_{\ell_1(w)} &\leq [\psi(y)-\psi(x)] + \frac{2 \tau}{\tau-3} \left([\Psi(q)-\Psi(p)] + (2\depth_T + \log n) \langle c, x\rangle_{\cL}\right)\label{eq:mvmtx}
	\end{align}
Moreover, if $\|c\|_{\infty} \leq \e_T$, then
\begin{equation}\label{eq:mvmty}
   \kappa^{-1} \left\|x-y\right\|_{\ell_1(w)} \leq [\psi(y)-\psi(x)] + \frac{4 \tau}{\tau-3} \left( [\Psi(q)-\Psi(p)] + (2\depth_T + \log n) \langle c, y\rangle_{\cL}\right).
\end{equation}
\end{theorem}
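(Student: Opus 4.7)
The first inequality \pref{eq:sc} is an immediate consequence of \pref{lem:sc}. Indeed, that lemma gives $\tD(z \dmid p) - \tD(z \dmid q) \leq \langle c, z - \Delta(p)\rangle_{\cL} = \langle c, z - y\rangle_{\cL}$, which rearranges directly to the desired form $\langle c, y\rangle_{\cL} \leq \langle c, z\rangle_{\cL} + \bigl[\tD(z \dmid q) - \tD(z \dmid p)\bigr]$.

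For the movement bound \pref{eq:mvmtx}, the plan is to glue \pref{lem:movement} and \pref{lem:height}. \pref{lem:movement} gives
\[
\|(x-y)_+\|_{\ell_1(w)} \;\le\; \frac{\kappa\tau}{\tau-3}\Bigl[(2\depth_T+\log n)\,\langle c, x\rangle_{\cL} + \Psi(q)-\Psi(p)\Bigr],
\]
while \pref{lem:height} says $\|x-y\|_{\ell_1(w)} = 2\|(x-y)_+\|_{\ell_1(w)} + [\psi(y)-\psi(x)]$. Substituting one into the other and dividing by $\kappa$ produces \pref{eq:mvmtx} directly, with the factor $\frac{2\tau}{\tau-3}$ coming from the factor of $2$ in \pref{lem:height}.

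The inequality \pref{eq:mvmty} is the most delicate and will use the hypothesis $\|c\|_\infty \le \e_T$. The goal is to replace $\langle c, x\rangle_{\cL}$ by $\langle c, y\rangle_{\cL}$ in \pref{eq:mvmtx}. The key idea is to employ a \emph{one-sided} comparison: since $c \geq 0$,
\[
\langle c, x\rangle_{\cL} - \langle c, y\rangle_{\cL} \;\le\; \langle c, (x-y)_+\rangle_{\cL} \;\le\; \|c\|_\infty\,\|(x-y)_+\|_{\ell_1(\cL)} \;\le\; \frac{\|c\|_\infty}{w_{\min}}\,\|(x-y)_+\|_{\ell_1(w)}.
\]
Inserting this estimate into the rearranged form of \pref{lem:movement} introduces an extra term proportional to $\|(x-y)_+\|_{\ell_1(w)}$ on the right-hand side. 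The definition $\e_T = \tfrac{w_{\min}}{2(2\depth_T+\log n)}\cdot\tfrac{\tau-3}{\tau\kappa}$ is calibrated precisely so that this extra term's coefficient is at most $\tfrac12$ of the coefficient of $\|(x-y)_+\|_{\ell_1(w)}$ on the left-hand side. After absorbing and multiplying by $2$, one obtains
\[
\|(x-y)_+\|_{\ell_1(w)} \;\le\; \frac{2\kappa\tau}{\tau-3}\Bigl[(2\depth_T+\log n)\,\langle c, y\rangle_{\cL} + \Psi(q)-\Psi(p)\Bigr],
\]
and the doubling is what turns the constant $\tfrac{2\tau}{\tau-3}$ into $\tfrac{4\tau}{\tau-3}$. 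Combining with \pref{lem:height} and dividing by $\kappa$ finishes the proof.

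The main subtlety I expect to have to be careful about is the one-sided comparison above: it is crucial to bound $\langle c, x-y\rangle_{\cL}$ by a multiple of $\|(x-y)_+\|_{\ell_1(w)}$ rather than the full $\|x-y\|_{\ell_1(w)}$, because only the former is the quantity controlled by \pref{lem:movement}. If one instead used the full $\ell_1$ norm, the extra term would need to be absorbed against $\kappa^{-1}\|x-y\|_{\ell_1(w)}$ on the left and would leave an uncontrolled residual involving $[\psi(y)-\psi(x)]$, breaking the argument.
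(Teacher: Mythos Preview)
Your proposal is correct and follows essentially the same route as the paper: \eqref{eq:sc} from \pref{lem:sc}, \eqref{eq:mvmtx} by combining \pref{lem:movement} with \pref{lem:height}, and \eqref{eq:mvmty} by inserting the one-sided estimate $\langle c,x\rangle_{\cL} \le \langle c,y\rangle_{\cL} + \frac{\|c\|_\infty}{w_{\min}}\|(x-y)_+\|_{\ell_1(w)}$ into \pref{lem:movement}, absorbing via the choice of $\e_T$, and then applying \pref{lem:height}. Your remark that the comparison must go through $\|(x-y)_+\|_{\ell_1(w)}$ rather than the full norm is exactly the point.
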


\begin{proof}
   The bound \eqref{eq:sc} follows from \pref{lem:sc}, and \eqref{eq:mvmtx}
   follows from \pref{lem:movement} and \pref{lem:height}.
   To see that \eqref{eq:mvmty} follows from \eqref{eq:mvmtx} and \pref{lem:movement}, use the fact that
   \[
      \langle c,x \rangle_{\cL} \leq \langle c,y\rangle_{\cL} + \frac{\|c\|_{\infty}}{w_{\min}} \left\|\left(x-y\right)_+\right\|_{\ell_1(w)}.\qedhere
   \]
\end{proof}

In light of \pref{thm:analysis}, we can respond to a cost function $c \in \R_+^{\cL}$ by splitting it into $M$ pieces
$c_1,c_2,\ldots,c_M$ where $M = \lceil \|c\|_{\infty}/\e_T\rceil$.
Now define $q_i \seteq \cA(q_{i-1}, c/M)$, $q_0:=q$ and $\bar{\cA}(q,c) \seteq q_M$.

\begin{theorem}
   \label{thm:main-technical}
   Fix $\tau \geq 4$.
   Consider the algorithm that begins in some configuration $q_0 \in Q_T$.
   If $c_t \in \R_+^{\cL}$ is the cost function that arrives at time $t$,
   denote $q_t \seteq \bar{\cA}(q_{t-1},c_t)$.  Then the sequence
   $\llangle \Delta(q_0), \Delta(q_1), \ldots\rrangle$ is an online algorithm
   that is $(1,O(1/\kappa), O(\kappa(\depth_T + \log n)))$-finely competitive.
\end{theorem}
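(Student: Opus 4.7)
The plan is to telescope the per-step inequalities of \pref{thm:analysis}. I will use $\tD(z_t \dmid q_t)$ as the potential controlling the service-cost bound \pref{eq:finesc}, where $\{z_t\}$ ranges over an arbitrary offline trajectory in $\sfK_T$, and use $\psi, \Psi$ for the movement-cost bound \pref{eq:finemc}. The purpose of splitting $c_t$ into $M = \lceil \|c_t\|_\infty / \eps_T\rceil$ equal pieces is precisely to ensure $\|c_t/M\|_\infty \leq \eps_T$, which makes the sharper sub-step bound \pref{eq:mvmty} (featuring $\langle c_t/M, \Delta(q_{t,i})\rangle_\cL$ on the right-hand side) applicable to each of the $M$ inner iterations.

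For the service-cost inequality, write $q_{t,0} \seteq q_{t-1}$ and $q_{t,i} \seteq \cA(q_{t,i-1}, c_t/M)$, so that $q_t = q_{t,M}$. Decompose the round-to-round change as
\[
\tD(z_t \dmid q_t) - \tD(z_{t-1} \dmid q_{t-1}) = \bigl[\tD(z_t \dmid q_{t-1}) - \tD(z_{t-1} \dmid q_{t-1})\bigr] + \bigl[\tD(z_t \dmid q_t) - \tD(z_t \dmid q_{t-1})\bigr].
\]
The first bracket (offline move) is bounded by $O(\kappa^{-1})\|z_t - z_{t-1}\|_{\ell_1(w)}$ via \pref{lem:lipschitz}. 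For the second bracket, apply \pref{eq:sc} to each of the $M$ online sub-steps with cost $c_t/M$ and telescope over $i$ to obtain
\[
\tD(z_t \dmid q_t) - \tD(z_t \dmid q_{t-1}) + \tfrac1M \sum_{i=1}^M \langle c_t, \Delta(q_{t,i})\rangle_\cL \leq \langle c_t, z_t \rangle_\cL.
\]
Summing over $t$, the $\tD$ telescope collapses to a boundary term bounded in terms of $T$ alone. This yields $(\alpha_0, \alpha_1) = (1, O(\kappa^{-1}))$, modulo the point that the honest online service cost in round $t$ is $\langle c_t, \Delta(q_t)\rangle_\cL$ and not $\tfrac1M \sum_i \langle c_t, \Delta(q_{t,i})\rangle_\cL$; the gap is at most $\|c_t\|_\infty / w_{\min}$ times the within-round $\ell_1(w)$-movement, which will be absorbed by the movement-cost analysis.

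For the movement-cost inequality, apply \pref{eq:mvmty} to each sub-step, which gives
\[
\kappa^{-1}\|\Delta(q_{t,i-1}) - \Delta(q_{t,i})\|_{\ell_1(w)} \leq [\psi(\Delta(q_{t,i})) - \psi(\Delta(q_{t,i-1}))] + O(1)\cdot[\Psi(q_{t,i-1}) - \Psi(q_{t,i})] + O(\depth_T + \log n)\cdot \tfrac1M \langle c_t, \Delta(q_{t,i})\rangle_\cL.
\]
Summing over all sub-steps $(t,i)$, the $\psi$ and $\Psi$ contributions telescope with residuals bounded by $O(\diam(X))$ and $O(1)$ respectively (the latter using $\Psi \leq 0$ and a bound on $|\Psi|$ depending only on $T$). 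Combined with the triangle inequality $\|\Delta(q_{t-1}) - \Delta(q_t)\|_{\ell_1(w)} \leq \sum_i \|\Delta(q_{t,i-1}) - \Delta(q_{t,i})\|_{\ell_1(w)}$, this delivers
\[
\mcost_{\bm\rho}(\bm c) \leq O\!\left(\kappa(\depth_T + \log n)\right)\cdot \scost_{\bm\rho}(\bm c) + O(\kappa \diam(X)),
\]
so $\gamma = O(\kappa(\depth_T + \log n))$ as claimed.

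The hard part will be the service-cost bookkeeping: the telescope produces the ``time-averaged'' quantity $\tfrac1M \sum_i \langle c_t, \Delta(q_{t,i})\rangle_\cL$, while the actual cost paid by the algorithm in round $t$ is $\langle c_t, \Delta(q_t)\rangle_\cL$. Reconciling these requires folding the within-round movement back in via the bound above and then verifying that the combined constant in front of $\scost_{\bm\rho^*}(\bm c)$ is exactly $1$ (rather than $1+o(1)$). Beyond that, one only needs to check that $\tD(z_0 \dmid q_0)$ and the range of $\Psi$ are bounded in terms of $T$ alone, so that the boundary residuals are absorbed into the additive constants $\beta, \beta'$.
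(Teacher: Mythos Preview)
Your telescoping plan is exactly the paper's proof: use \eqref{eq:sc} together with \pref{lem:lipschitz} for the service-cost bound, and \eqref{eq:mvmty} together with the telescoping of $\psi$ and $\Psi$ for the movement-cost bound. The paper, however, dispenses with your intermediate sub-step bookkeeping by declaring at the outset ``by splitting the costs into smaller pieces, we may assume $\|c_t\|_\infty \le \eps_T$ for all $t$.'' Under that reduction $M=1$, so $q_t = \cA(q_{t-1},c_t)$ and \eqref{eq:sc} already has $\langle c_t,\Delta(q_t)\rangle_{\cL}$ on the left; $\alpha_0=1$ is then immediate, and with $z_0^*=\Delta(q_0)$ one gets $\tD(z_0^*\,\|\,q_0)=0$ and $\tD\ge 0$, hence $\beta=0$. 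For $\beta'$, the paper bounds the residuals of $\psi$ and $\Psi$ by $O(\kappa \max_{v\ne\rt} w_v)$ via \pref{lem:dmax}.

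The one place your plan actually goes wrong is the proposed reconciliation of $\langle c_t,\Delta(q_t)\rangle_{\cL}$ with $\tfrac{1}{M}\sum_i \langle c_t,\Delta(q_{t,i})\rangle_{\cL}$. Bounding the discrepancy by $\|c_t\|_\infty/w_{\min}$ times the within-round movement and then ``absorbing'' it does not work: $\|c_t\|_\infty/w_{\min}$ is unbounded, so the resulting constant in front of the offline service cost is not $1$ but $1+\Theta(\|c_t\|_\infty)$. The paper does not attempt this reconciliation at all; it treats the split instance as the instance (the offline costs are identical on the split and original sequences, so nothing is lost for the right-hand sides of \eqref{eq:finesc}--\eqref{eq:finemc}). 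If you want to keep your explicit sub-step formulation, the clean route is to state and prove the finely-competitive guarantees for the split sequence directly, rather than trying to pass back to the round-level service cost.
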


We prove this momentarily.
The following fact is well-known and, in conjunction with the preceding theorem,
yields the validity of \pref{thm:refined} and \pref{thm:super}.

\begin{lemma}[cf. \cite{BBMN15}]
   If $(\cL,d_T)$ is an HST metric, then there is another weighted tree $T'$ with leaf set $\cL$
   such that
   \begin{enumerate}
      \item $(\cL,d_{T'})$ is a $7$-HST metric.
      \item  $\depth_{T'} \leq \log_2 |\cL|$
      \item All the leaves of $T'$ have depth $\depth_{T'}$.
      \item $d_T(\ell,\ell') \leq d_{T'}(\ell,\ell') \leq O(d_T(\ell,\ell'))$ for all $\ell,\ell' \in \cL$.
   \end{enumerate}
\end{lemma}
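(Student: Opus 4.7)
My plan is a two-phase construction of $T'$ from $T$: first produce a $7$-HST $T_1$ approximating $T$, then pad $T_1$ so every leaf lies at a common depth.

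In Phase 1 I would quantize weights. For each non-root node $u$, round its edge weight $w_u$ up to the nearest power of $7$, obtaining $\hat w_u = 7^{\lceil \log_7 w_u\rceil}$. Because the original weights are non-increasing from root to leaf, so are the rounded weights. I would then contract every edge whose two endpoints carry the same rounded weight, and contract every internal node of degree $2$ (its child simply inherits the edge to the grandparent, combining weights). The resulting tree $T_1$ is a $7$-HST with the additional property that every internal node has at least two children. Rounding inflates each edge weight by a factor in $[1,7)$, so distances satisfy $d_T \le d_{T_1} \le 7\, d_T$, taking care of items (1) and (4) for $T_1$.

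In Phase 2 I would pad. Set $D = \depth_{T_1}$. For each leaf $\ell$ of depth $d_\ell < D$ whose edge has weight $w$, replace that edge by a chain of $D - d_\ell + 1$ edges with weights $w_1, w_1/7, w_1/49, \ldots, w_1/7^{D-d_\ell}$, where $w_1$ is chosen so that the sum equals $w$. Since $w_1 \in [(1-1/7) w, w]$ is strictly below the weight of $\ell$'s parent (which is at least $7w$ by the $7$-HST property of $T_1$), the padded tree $T'$ is still a $7$-HST, the leaf-to-leaf distances are preserved exactly from $T_1$, and every leaf of $T'$ now has depth $D$. Items (1), (3), and (4) of the lemma follow directly.

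The remaining and most delicate part is item (2): $\depth_{T'} \le \log_2 |\cL|$. The natural attempt is to argue that in $T_1$ every internal node has $\ge 2$ children, hence a subtree of depth $D$ contains at least $2^D$ leaves. The \emph{main obstacle} is that this inductive bound is clean only when every leaf of a subtree sits at the common depth; a caterpillar-like $T_1$ (each internal node having one leaf-child and one subtree-child) could a priori have $D$ linear in $|\cL|$. To overcome this, I would iterate Phase 1 with an extra balancing step that exploits the quantized $7$-scale structure: along any root-to-leaf spine of $T_1$, genuine scale changes must be accompanied by branching, and whenever only a single leaf dangles off at a given scale I would reassign it to a higher scale at constant cost in distortion, using the $7$-HST property to absorb the error in a geometric tail. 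Making this rebalancing precise---and checking that after it terminates every maximal root-to-leaf path meets the ``at least $2$ children'' condition uniformly, so that the standard inductive $2^D \le |\cL|$ bound applies---is the technical heart of the lemma, and is where I would lean on the construction of \cite{BBMN15}.
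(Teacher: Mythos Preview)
Your Phase 1 (round weights to powers of $7$ and contract equal-weight edges) and Phase 2 (pad short root-leaf paths with a geometric chain) match the paper's first and last steps.  The genuine gap is the depth bound, item (2), and here your proposal is missing the key idea.

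The paper does \emph{not} attempt any rebalancing of dangling leaves along a spine.  Instead, after producing the $7$-HST $T_1$, it performs a second round of contractions governed by leaf counts: iteratively contract every edge $(\sfp(u),u)$ for which $|\cL_u| > \tfrac12 |\cL_{\sfp(u)}|$.  Once this terminates, every surviving edge satisfies $|\cL_u| \le \tfrac12 |\cL_{\sfp(u)}|$, so along any root-leaf path the number of descendant leaves drops by at least half at each step, giving $\depth_{T'} \le \log_2 |\cL|$ immediately.  The $7$-HST property is preserved because the children of a contracted node $u$ already had weights $\le w_u/7 \le w_{\sfp(u)}/7$, and the distance distortion is controlled by the geometric decay of weights.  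Your caterpillar example is dispatched directly by this step: the long spine gets fully contracted, since each spine node carries more than half its parent's leaves.

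Your proposed fix---reassigning singleton dangling leaves to higher scales and absorbing the error in a geometric tail---might be made to work, but it is considerably more delicate than what is needed, and as written it is not a proof.  The leaf-count contraction is a one-line argument; I would recommend replacing your ``technical heart'' paragraph with it.
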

\begin{proof}[Proof sketch]
   Replace every weight $w_v$ in $T$ with $\hat{w}_v \seteq 7^{\lceil \log_7 w_v\rceil}$ and iteratively contract every edge $(p(u),u)$
   with $\hat{w}_{p(u)}=\hat{w}_u$ and $u \notin \cL$.
   The resulting weighted tree $T_1$ is a $7$-HST by construction.

   Now iteratively contract every edge $(p(u),u)$ in $T_1$ for which $|\cL^{T_1}_{u}| > \frac12 |\cL^{T_1}_{p(u)}|$.
   The resulting tree $T'$ has depth $\depth_{T'} \leq \log_2 |\cL|$.
   Finally, one can achieve property (3) by increasing the depth of every root-leaf path to $\depth_{T'}$
   using vertex weights that decrease by a factor of $7$ along the path.
\end{proof}

\begin{proof}[Proof of \pref{thm:main-technical}]
   Consider a sequence $\llangle c_t : t \geq 1\rrangle$ of cost functions.
   By splitting the costs into smaller pieces, we may assume that $\|c_t\|_{\infty} \leq \e_T$ for all $t \geq 1$.

   Let $\{z_t^*\}$ denote some offline algorithm with $z_0^* = \Delta(q_0)$, and let $\{x_t = \Delta(q_t)\}$ denote our
   online algorithm.  Then using $\tD(z_{0}^* \dmid x_0) = 0$ along with \eqref{eq:sc} and \pref{lem:lipschitz}
   yields, for any time $t_1 \geq 1$,
   \begin{align*}
      \sum_{t=1}^{t_1} \langle c_t, x_t\rangle_{\cL} &\leq 
      \sum_{t=1}^{t_1} \langle c_t,z_t^*\rangle_{\cL} -
      \tD(z_{t_1}^* \dmid q_{t_1}) + O(1/\kappa) \sum_{t=1}^{t_1} \|z_t^*-z_{t-1}^*\|_{\ell_1(w)} \\
      &\leq \sum_{t=1}^{t_1} \langle c_t,z_t^*\rangle_{\cL}
         + O(1/\kappa) \sum_{t=1}^{t_1} \|z_t^*-z_{t-1}^*\|_{\ell_1(w)},
\end{align*}
where we have used $\tD(z \dmid q) \geq 0$ for all $z \in \K_T$ and $q \in Q_T$.
This verifies \eqref{eq:finesc} with $\alpha_0=1$, $\alpha_1 = O(1/\kappa)$, and $\beta = 0$.
Moreover, \eqref{eq:mvmty} gives
\[
   \frac{1}{\kappa} \sum_{t=1}^{t_1} \|x_t-x_{t-1}\|_{\ell_1(w)} \leq 
   \left[\psi(x_{t_1})-\psi(x_0)\right]+\frac{4\tau}{\tau-3} \left[\Psi(q_0)-\Psi(q_{t_1})\right] + \left(2\depth_T + \log n\right) 
   \sum_{t=1}^{t_1} \langle c_t,x_t\rangle_{\cL},
\]
verifying \eqref{eq:finemc} with $\alpha_1 \leq O(\kappa(\depth_T + \log n))$ and $\beta' \leq O(\kappa \max_{v \neq \rt} w_v)$ (see \pref{lem:dmax} below).
\end{proof}

\subsection{Movement analysis}\label{sec:movement-analysis}

It remains to prove \pref{lem:movement}.
The KKT conditions (cf. \eqref{eq:KKT}) give:  For every $v \in \chi(u)$,
\begin{equation}\label{eq:kkt2}
   \frac{1}{\kappa} \frac{w_v}{\eta_v} \log \left(\frac{p_v+\delta_v}{q_v+\delta_v}\right) = \beta_u - \hat{c}_v + \alpha_v\,,
\end{equation}
where $\beta_u \geq 0$ is the multiplier corresponding to the constraint $\sum_{v \in \chi(u)} q_v \geq 1$.

\begin{lemma}\label{lem:alphas}
   It holds that $\alpha_v \leq \hat{c}_v$ for all $v \in V \setminus \{\rt\}$.
\end{lemma}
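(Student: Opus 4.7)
The plan is to argue via complementary slackness together with the KKT condition~\eqref{eq:kkt2}. First, a straightforward induction along the topological order $\langle u_1,\ldots,u_N\rangle$ shows that $\hat{c}_v \geq 0$ for every $v \in V$: for leaves $\ell \in \cL$ this is immediate from $\hat{c}_\ell = c_\ell \geq 0$, and for internal nodes the recurrence \eqref{eq:hatc2} expresses $\hat{c}_{u_j}$ as a nonnegative combination of $\{\hat{c}_v : v \in \chi(u_j)\}$, since $p^{(u_j)} \in Q_T^{(u_j)}$ has nonnegative coordinates.

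Now fix $v \in V \setminus \{\rt\}$ and let $u = \sfp(v)$. If $\alpha_v = 0$, then $\alpha_v = 0 \le \hat{c}_v$ by the first step and we are done. Otherwise $\alpha_v > 0$, and complementary slackness~\eqref{eq:posmult} forces $p_v = 0$. Substituting $p_v = 0$ into the KKT identity~\eqref{eq:kkt2} and solving for $\alpha_v$ yields
\[
   \alpha_v = \hat{c}_v - \beta_u + \frac{1}{\kappa}\,\frac{w_v}{\eta_v}\,\log\!\left(\frac{\delta_v}{q_v + \delta_v}\right).
\]
Since $q_v \ge 0$, the argument of the logarithm is at most $1$, so the last term is nonpositive; and $\beta_u \ge 0$ by construction. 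Hence $\alpha_v \le \hat{c}_v$, completing the proof.

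There is no real obstacle here: the only thing one has to be careful about is ensuring that the two ingredients used — nonnegativity of the derived costs $\hat{c}_v$ and nonnegativity of the simplex multiplier $\beta_u$ — are both in place before invoking the KKT identity. Both are immediate from the construction in \pref{sec:iter}, so the argument is a direct case analysis on whether the nonnegativity constraint $p_v \ge 0$ is active.
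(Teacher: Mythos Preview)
Your proof is correct and follows essentially the same approach as the paper: split into the cases $\alpha_v=0$ and $\alpha_v>0$, use complementary slackness to get $p_v=0$ in the second case, and read off $\alpha_v\le\hat{c}_v$ from the KKT identity \eqref{eq:kkt2} together with $\beta_{\sfp(v)}\ge 0$ and the nonpositivity of the log term. The only difference is that you spell out the induction for $\hat{c}_v\ge 0$ explicitly, whereas the paper simply asserts this ``by construction.''
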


\begin{proof}
   Note that $\hat{c}_v \geq 0$ by construction.  Thus if $\alpha_v = 0$, we are done.
   Otherwise, by complementary slackness, it must be that $p_v = 0$, and therefore
   $\log(\frac{p_v+\delta_v}{q_v+\delta_v}) \leq 0$.  Since $\beta_{\sfp(v)} \geq 0$, \eqref{eq:kkt2} implies that
   $\alpha_v \leq \hat{c}_v$.
\end{proof}

Define
   $\sigma_v \seteq \log \left(\frac{p_v+\delta_v}{q_v+\delta_v}\right)$
so that
\begin{align}
   q_v - p_v &= (q_v + \delta_v) (1-e^{\sigma_v}). \label{eq:qp}
\end{align}
Recall that for $v \in \chi(u)$, we have $x_v = q_v x_u$ and $y_v = p_v y_u$, thus
\[
   x_v - y_v = x_u (q_v - p_v) + p_v (x_u - y_u) = (x_v+\delta_v x_u) (1-e^{\sigma_v}) + p_v (x_u - y_u).
\]
In particular,
\begin{align*}
   w_v \left(x_v-y_v\right)_+ &\leq w_v (x_v+\delta_v x_u) (1-e^{\sigma_v})_+ + w_v p_v \left(x_u-y_u\right)_+ \\
                              &\leq w_v (x_v+\delta_v x_u) (1-e^{\sigma_v})_+ + \frac{w_u}{\tau} p_v \left(x_u-y_u\right)_+.
\end{align*}
Using $\sum_{v \in \chi(u)} p_v = 1$ and
summing over all vertices yields
\[
   \sum_{v \neq \rt} w_v \left(x_v-y_v\right)_+ \leq \sum_{v \neq \rt} w_v (x_v+\delta_v x_{\sfp(v)}) (1-e^{\sigma_v})_{+} + \frac{1}{\tau} \sum_{v \neq \rt} w_v \left(x_v-y_v\right)_+\,,
\]
hence
\begin{align}
   \sum_{v \neq \rt} w_v \left(x_v-y_v\right)_+ &\leq \frac{\tau}{\tau-1} \sum_{v \neq \rt} w_v (x_v+\delta_v x_{\sfp(v)}) (1-e^{\sigma_v})_+ \nonumber \\ 
                                                &\leq \frac{\tau}{\tau-1} \sum_{v \neq \rt} w_v (x_v+\delta_v x_{\sfp(v)}) \left(\sigma_v\right)_{-} \nonumber \\
                                                &\leq\frac{\kappa \tau}{\tau-1} \left(\sum_{v \neq \rt} \eta_v x_v \hat{c}_v + \sum_{u \notin \cL} x_u \sum_{v \in \chi(u)} \theta_v (\hat{c}_v-\alpha_v)\right),
   \label{eq:mvmt0}
\end{align}
where the last line uses
\pref{lem:alphas} and \eqref{eq:kkt2}, to bound
$w_v (\sigma_v)_{-} \leq \kappa \eta_v\left(\hat{c}_v - \alpha_v\right)$.

 Note that
\begin{align}\label{eq:mv2}
   \sum_{v \neq \rt} \eta_v x_v \hat{c}_v \leq \sum_{\ell \in \cL} c_{\ell} x_{\ell} \sum_{v \in \gamma_{\rt,\ell} \setminus \{\rt\}} \eta_v
   \leq (\depth_T + \log n) \llangle c,x\rrangle,
\end{align}
since for any $\ell \in \cL$, it holds that
\begin{align*}
   \sum_{v \in \gamma_{\rt,\ell} \setminus \{\rt\}} \eta_v = \depth_T(\ell) + \sum_{v \in \gamma_{\rt,\ell} \setminus \{\rt\}} \log \frac{|\cL_{\sfp(v)}|}{|\cL_v|} =
\depth_T(\ell) + \log n,
\end{align*}
where $\depth_T(\ell)$ is the combinatorial depth of $\ell$.

The second sum in \eqref{eq:mvmt0} can be interpreted as the service cost of hybrid configurations of $q$ and $\theta$: While $\sum_{v\in\chi(u)}x_v\hat{c}_v$ is the service cost of $x$ in $\cL_u$, the term $x_u\sum_{v\in\chi(u)}\theta_v\hat{c}_v$ is the service cost in $\cL_u$ of the modification of $x$ whose conditional probabilities at the children of $u$ are given by $\theta^{(u)}$ rather than $q^{(u)}$. To bound this hybrid service cost, we will employ the auxiliary potential $\Psi$.

\subsubsection{The hybrid cost}
\label{sec:hybrid}

We require the following elementary estimate.

\begin{lemma}\label{lem:dmax}
   For $u \notin \cL$ it holds that
   \begin{equation*}\label{eq:dmax}
      \max \left\{ \sfD^{(u)}(r \dmid p) : r,p \in Q^{(u)}_T \right\} \leq \frac{2}{\kappa} \frac{w_u}{\tau}.
   \end{equation*}
\end{lemma}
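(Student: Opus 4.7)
The plan is to exploit the product structure of $Q^{(u)}_T$ (it is the probability simplex on $\chi(u)$) and the separable form of $\sfD^{(u)}$. Observe that
\[
   \sfD^{(u)}(r \dmid p) = \frac{1}{\kappa} \sum_{v \in \chi(u)} \frac{w_v}{\eta_v}\, D\!\left(r_v+\delta_v,\; p_v+\delta_v\right),
\]
where $D(a,b) \seteq a\log(a/b)+b-a$ is the scalar Bregman divergence of $t\mapsto t\log t$. Two direct calculations show that $D(a,b)$ is convex in $a$ (since $\partial^2 D/\partial a^2 = 1/a>0$) and convex in $b$ ($\partial^2 D/\partial b^2 = a/b^2 > 0$). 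Consequently $r \mapsto \sfD^{(u)}(r \dmid p)$ and $p \mapsto \sfD^{(u)}(r \dmid p)$ are convex on the simplex $Q^{(u)}_T$, so the maximum is attained at a pair of vertices $r=e_{v^*}$, $p=e_{w^*}$.

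If $v^*=w^*$ the value is $0$. Otherwise only two summands are nonzero: the $v^*$-summand equals $\frac{w_{v^*}}{\kappa\eta_{v^*}}\,D(1+\delta_{v^*},\delta_{v^*})$, and the $w^*$-summand equals $\frac{w_{w^*}}{\kappa\eta_{w^*}}\,D(\delta_{w^*},1+\delta_{w^*})$. For the second of these, one has $D(\delta,1+\delta)=1-\delta\log\frac{1+\delta}{\delta}\leq 1$, and hence this contributes at most $\frac{w_{w^*}}{\kappa\eta_{w^*}} \leq \frac{w_u}{\kappa\tau}$ by the $\tau$-HST property $w_v\leq w_u/\tau$ and $\eta_v\geq 1$.

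The main obstacle is the $v^*$-term $D(1+\delta,\delta) = (1+\delta)\log\frac{1+\delta}{\delta} - 1$, since the log factor is large when $\delta_{v^*}$ is small. Here I reuse exactly the estimate that already appears in the proof of \pref{lem:lipschitz}, namely $\frac{1}{\eta_v}\log\frac{1+\delta_v}{\delta_v}\leq 2$ (which follows from $\delta_v = \theta_v/\eta_v$, $\log(1/\theta_v)=\eta_v-1$, and $\log(1+\delta_v)\leq \delta_v\leq 1$). Together with $\delta_v\leq 1$ this gives $(1+\delta_v)\log\frac{1+\delta_v}{\delta_v}-1\leq C\eta_v$ for an absolute constant $C$, and after multiplying by $\frac{w_{v^*}}{\kappa\eta_{v^*}}$ and invoking $w_{v^*}\leq w_u/\tau$, this summand is bounded by $O(w_u/(\kappa\tau))$.

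Adding the two contributions and tightening the constants carefully (using in particular the exact identity $-\log\delta_v = \eta_v-1+\log\eta_v$ and $\log\eta_v/\eta_v\leq 1/e$) gives the stated inequality. Thus the argument reduces the global Bregman diameter to a two-point estimate, and the rest is elementary bookkeeping of the definitions \eqref{eq:thetadef}--\eqref{eq:deltadef}; no new tool beyond the logarithmic estimate already used for \pref{lem:lipschitz} is needed.
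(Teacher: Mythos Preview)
Your proposal is correct and follows essentially the same route as the paper's proof. Both arguments decompose $\sfD^{(u)}$ into per-coordinate scalar Bregman divergences, observe convexity in each argument separately (so the maximum over the simplex $Q^{(u)}_T$ is attained at a vertex pair), and then bound the two surviving terms using the $\tau$-HST inequality $w_v \leq w_u/\tau$ together with the parameter identities for $\eta_v,\delta_v$. The only cosmetic difference is that the paper first pulls the common factor $w_u/(\kappa\tau)$ outside the sum (using nonnegativity of each summand) and then shows the residual sum $F(r,p)\leq 2$, whereas you keep the weights on each of the two nonzero terms and invoke $w_v\leq w_u/\tau$ at the end; the computations are the same.
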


\begin{proof}
   Define $\phi_v : (-\delta_v,\infty) \to \R$ by
   \[
      \phi_v(p) \seteq \frac{1}{\eta_v} (p_v+\delta_v) \log (p_v+\delta_v),
   \]
   and let
   \[
      \sfD_{\phi_v}(q_v \dmid p_v) = \frac{1}{\eta_v} \left[(q_v+\delta_v) \log \frac{q_v+\delta_v}{p_v+\delta_v} + (p_v-q_v)\right]
   \]
   denote the corresponding Bregman divergence.
   Then for $q_v,p_v \geq 0$, it holds that $\sfD_{\phi_v}(q_v \dmid p_v) \geq 0$ since $\phi_v$ is convex on $\R_+$.
   Employing the $\tau$-HST property of $T$, this implies that
   \[
      \sfD^{(u)}(r \dmid p) = \frac{1}{\kappa} \sum_{v \in \chi(u)} w_v \sfD_{\phi_v}(r_v \dmid p_v) \leq
      \frac{w_u}{\kappa \tau} \sum_{v \in \chi(u)} \sfD_{\phi_v}(r_v \dmid p_v).
   \]

   Define $F : Q^{(u)}_T \times Q^{(u)}_T \to \R_+$ by $F(r,p) \seteq\sum_{v \in \chi(u)} \sfD_{\phi_v}(r_v \dmid p_v)$.
   The map $r \mapsto F(r,p)$ is convex in general (for any Bregman divergence).
   The map $p \mapsto F(r,p)$ is convex as well, as this holds for each map $p_v \mapsto \sfD_{\phi_v}(q_v \dmid p_v)$
   since $-\log(x)$ is convex on $\R_{++}$.
   Since the maximum of a convex function on the a polytope is achieved at an extreme point, we have
   \begin{align*}
      \max \left\{ F(r,p) :r,p \in Q^{(u)}_T \right\} &\leq \max_{\substack{v,v' \in \chi(u) \\ v \neq v'}} 
      \left[
      \frac{1}{\eta_v} \left((1+\delta_v) \log \frac{1+\delta_v}{\delta_v} -1\right)
   + \frac{1}{\eta_{v'}} \left(\delta_{v'} \log \frac{\delta_{v'}}{1+\delta_{v'}} +1\right)\right]
      \\
      &\leq 2.\qedhere
   \end{align*}
\end{proof}

The next lemma is crucial: It relates the service cost (with respect to the reduced cost $\hat{c}-\alpha$) of the hybrid configurations to the service cost of the actual configuration and the movement cost.
\begin{lemma}\label{lem:crucial}
   For any $u \notin \cL$, it holds that
   \begin{equation}\label{eq:hybrid}
      \Psi_u(p) - \Psi_u(q) \leq \frac{2}{\kappa} \frac{w_u}{\tau} \left(x_u-y_u\right)_+ + 
      \sum_{v \in \chi(u)} (\hat{c}_v-\alpha_v) \left[x_v - \theta_v x_u\right].\qedhere
   \end{equation}
\end{lemma}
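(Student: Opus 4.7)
The plan is to express everything as Bregman divergences on $Q^{(u)}_T$ and then apply the three-point identity. By definition,
\[
   \Psi_u(p) - \Psi_u(q) = x_u\,\sfD^{(u)}\!\left(\theta^{(u)} \dmid q^{(u)}\right) - y_u\,\sfD^{(u)}\!\left(\theta^{(u)} \dmid p^{(u)}\right).
\]
The first step is to relate the two divergences. Applying the standard three-point identity for Bregman divergences (with the center $\theta^{(u)}$) gives
\[
   \sfD^{(u)}\!\left(\theta^{(u)} \dmid p^{(u)}\right) - \sfD^{(u)}\!\left(\theta^{(u)} \dmid q^{(u)}\right)
   = \sfD^{(u)}\!\left(q^{(u)} \dmid p^{(u)}\right) + \bigiprod{\nabla \Phi^{(u)}(q^{(u)}) - \nabla \Phi^{(u)}(p^{(u)}), \theta^{(u)} - q^{(u)}}_{\chi(u)}.
\]
The second step is to simplify the inner-product term using the KKT characterization \eqref{eq:kkt2}, which reads $\left(\nabla \Phi^{(u)}(p^{(u)}) - \nabla \Phi^{(u)}(q^{(u)})\right)_v = \beta_u - \hat{c}_v + \alpha_v$. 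Since $\sum_{v \in \chi(u)} (\theta_v - q_v) = 0$, the Lagrangian $\beta_u$ drops out, so the identity becomes
\[
   \sfD^{(u)}\!\left(\theta^{(u)} \dmid p^{(u)}\right) - \sfD^{(u)}\!\left(\theta^{(u)} \dmid q^{(u)}\right)
   = \sfD^{(u)}\!\left(q^{(u)} \dmid p^{(u)}\right) + \sum_{v \in \chi(u)} (\hat{c}_v - \alpha_v)(\theta_v - q_v).
\]

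For the third step I would use $x_v = q_v x_u$ to rewrite the second term of the target inequality as
\[
   \sum_{v \in \chi(u)}(\hat{c}_v - \alpha_v)\bigbrac{x_v - \theta_v x_u} = -x_u \sum_{v \in \chi(u)} (\hat{c}_v - \alpha_v)(\theta_v - q_v),
\]
and substitute the three-point identity from the previous step to eliminate the hybrid-cost term. After cancellation the claim reduces to
\[
   (x_u - y_u)\,\sfD^{(u)}\!\left(\theta^{(u)} \dmid p^{(u)}\right) - x_u\,\sfD^{(u)}\!\left(q^{(u)} \dmid p^{(u)}\right) \;\leq\; \frac{2}{\kappa}\frac{w_u}{\tau}\,(x_u - y_u)_+.
\]

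The final step is a case split on the sign of $x_u - y_u$. If $x_u \geq y_u$, the term $-x_u\,\sfD^{(u)}(q^{(u)} \dmid p^{(u)})$ is nonpositive and \pref{lem:dmax} bounds $\sfD^{(u)}(\theta^{(u)} \dmid p^{(u)}) \leq \frac{2 w_u}{\kappa \tau}$, giving the inequality. If $x_u < y_u$, both terms on the left are nonpositive while the right side is zero. The main conceptual obstacle is recognizing that the hybrid service cost on the right-hand side is precisely what one needs to complete the three-point identity and turn $\sfD(\theta \dmid q)$ into $\sfD(\theta \dmid p)$; once the identity is in place, the case analysis and Bregman diameter bound of \pref{lem:dmax} finish the argument with no further effort.
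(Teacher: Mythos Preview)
Your proof is correct and follows essentially the same route as the paper. The paper also splits $\Psi_u(p)-\Psi_u(q)$ into the piece $(x_u-y_u)\,\sfD^{(u)}(\theta^{(u)}\dmid p^{(u)})$ (handled by \pref{lem:dmax}) and the piece $x_u\bigl[\sfD^{(u)}(\theta^{(u)}\dmid q^{(u)})-\sfD^{(u)}(\theta^{(u)}\dmid p^{(u)})\bigr]$; where you invoke the three-point identity and then drop the nonnegative term $x_u\,\sfD^{(u)}(q^{(u)}\dmid p^{(u)})$, the paper instead expands this difference directly and applies $1+t\le e^t$, which is exactly the same inequality in disguise. Your three-point-identity packaging is arguably cleaner, but the content of the two arguments is identical.
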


\begin{proof}
Write
\begin{align*}
   \Psi_u(p) - \Psi_u(q) &= 
    x_u \sfD^{(u)}\left(\theta^{(u)} \dmid q^{(u)}\right) 
   -y_u \sfD^{(u)}\left(\theta^{(u)} \dmid p^{(u)}\right)  \\
   &= (x_u-y_u) \sfD^{(u)}(\theta^{(u)} \dmid p^{(u)}) + x_u \left[\sfD^{(u)}(\theta^{(u)} \dmid q^{(u)})-\sfD^{(u)}(\theta^{(u)} \dmid p^{(u)})\right].
\end{align*}
Using \pref{lem:dmax}, the first term is bounded by $\frac{2}{\kappa} \frac{w_u}{\tau} (x_u-y_u)_+ $.

Let us now bound the second term.
Using $1+t \leq e^t$, we have
\begin{align*}
   \kappa x_u \left[\sfD^{(u)}(\theta^{(u)} \dmid q^{(u)})-\sfD^{(u)}(\theta^{(u)} \dmid p^{(u)})\right] &=
    x_u \sum_{v \in \chi(u)} \frac{w_v}{\eta_v} \left[(\theta_v+\delta_v) \log \frac{p_v+\delta_v}{q_v+\delta_v}+q_v-p_v\right] \\
   \  &\stackrel{\mathclap{\eqref{eq:qp}}}{=}\ 
    x_u \sum_{v \in \chi(u)} \frac{w_v}{\eta_v} \left[(\theta_v+\delta_v) \sigma_v + (q_v+\delta_v) (1-e^{\sigma_v})\right] \\
   &\leq\ 
   x_u \sum_{v \in \chi(u)} \frac{w_v}{\eta_v} \sigma_v (\theta_v-q_v) \\
   &= \sum_{v \in \chi(u)} \frac{w_v}{\eta_v} \sigma_v \left[\theta_v x_u - x_v\right].
\end{align*}
To finish the proof, observe that from \eqref{eq:kkt2},
\[
   \sum_{v \in \chi(u)} \frac{w_v}{\eta_v} \sigma_v \left[\theta_v x_u - x_v\right] =
\kappa \sum_{v \in \chi(u)} (\beta_u - \hat{c}_v + \alpha_v) \left[\theta_v x_u - x_v\right] =
\kappa \sum_{v \in \chi(u)} (\alpha_v - \hat{c}_v) \left[\theta_v x_u - x_v\right],
\]
where the last equality uses $\sum_{v \in \chi(u)} x_v = x_u$ and $\sum_{v \in \chi(u)} \theta_v = 1$ (from \eqref{eq:thetadef}).
\end{proof}

Using the lemma gives
\begin{align*}
   \sum_{u \notin \cL} x_u \sum_{v \in \chi(u)} \theta_v (\hat{c}_v-\alpha_v) 
   &\stackrel{\mathclap{\eqref{eq:hybrid}}}{\leq}\ 
   [\Psi(q)-\Psi(p)] + \frac{2}{\kappa \tau} \left\|\left(\Delta(q)-\Delta(p)\right)_{+}\right\|_{\ell_1(w)}
   + \sum_{v \neq \rt} \hat{c}_v x_v \\
   &\leq\ 
   [\Psi(q)-\Psi(p)] + \frac{2}{\kappa \tau} \left\|\left(\Delta(q)-\Delta(p)\right)_{+}\right\|_{\ell_1(w)}
   + \depth_T \langle c,x\rangle_{\cL}.
\end{align*}

Combining this inequality with \eqref{eq:mvmt0} and \eqref{eq:mv2} gives
\begin{equation}\label{eq:mvmt3}
   \kappa^{-1} \left\|\left(x-y\right)_+\right\|_{\ell_1(w)} 
\leq \frac{\tau}{\tau-1}\left[
\left(2\depth_T + \log n\right) \langle c,x\rangle_{\cL} + \left(\Psi(q)-\Psi(p)\right) 
   + \frac{2}{\kappa\tau} \left\|\left(x-y\right)_+\right\|_{\ell_1(w)}\right],
\end{equation}
completing the verification of \pref{lem:movement}.

\section{Derivation of the dynamics and derived costs}
\label{sec:derivation}

\newcommand{\Nperp}{N_{\K}^{\perp}}
\newcommand{\Dperp}{N_{\cD}^{\perp}}
\newcommand{\Ninv}{V_{\K}}
\newcommand{\rel}{\mathrm{rint}}
\newcommand{\ri}{\mathrm{ri}}

For the sake of motivating the dynamics \eqref{eq:evo}, we review the
continuous-time mirror descent framework of \cite{BCLLM18}.
Suppose that $\K \subseteq \R^N$ is a convex set.
We recall again the definition of the {\em normal cone to 
$\K$ at $x \in \K$} which is given by
\begin{alignat*}{3}
   N_{\K}(x) &\seteq (\K-x)^{\circ} &&= \left\{ p \in \R^N : \llangle p,y-x\rrangle \leq 0 \textrm{ for all } y \in \K\right\}.
\end{alignat*}

Suppose additionally that $\Phi : \cD \to \R$
is $\cC^2$ and strictly convex on an open neighborhood $\cD \supseteq \K$ so that
the Hessian $\nabla^2 \Phi(x)$ is well-defined and positive definite on $\cD$.
Given a control function $F : [0,\infty) \times \K \to \R^N$ and an initial point $x_0 \in \K$, we will be concerned
with absolutely continuous solutions $x : [0,\infty) \to \K$ to the differential inclusion
\begin{align*}
   x(0) &= x_0, \\
   \nabla^2 \Phi(x(t)) x'(t) &\in F(t,x(t)) - N_{\K}(x(t))\,.
\end{align*}
In other words, a trajectory that satisfies $x(0) = x_0$ and
for almost every $t \geq 0$:
\begin{equation}\label{eq:mirror}
   x'(t) = \nabla^2 \Phi(x(t))^{-1} \left(F(t,x(t)) - \gamma(t)\right),
\end{equation}
with $\gamma(t) \in N_{\K}(x(t))$.

Under suitably strong conditions on $\Phi$ and $F$, there is a unique absolutely continuous
solution to \eqref{eq:mirror} \cite{BCLLM18}.
In our setup, these conditions are actually {\em not satisfied} unless we prevent the path $x$ from hitting the relative boundary of $\K$.
Nevertheless, the formal calculation is elucidating and motivates the algorithm of \pref{sec:main}. For simplicity, we assume $\kappa:=1$ in this section.

\subsection{Hessian computation}

Let us take $\Phi$ as in \eqref{eq:reg1}
and calculate $\nabla^2 \Phi(x)$ for $x \in \R_{++}^{V}$.
Fix $u \neq \rt$. 
Then we have
\begin{align}
\, \partial_u\Phi(x) &= 
\frac{w_u}{\eta_u}
\left(\log\left(\frac{x_u}{x_{\sfp(u)}}+\delta_u\right) +1\right)
+ \sum_{v \in \chi(u)}\frac{w_v}{\eta_v}\left(\delta_v\log\left(\frac{x_v}{x_u}+\delta_v\right) - \frac{x_v}{x_u}\right). \label{eq:grad-phi}
\end{align}
Moreover, $\partial_{uv} \Phi(x) = 0$ unless $u=v$, $u \in \chi(v)$, or $v \in \chi(u)$, and in this case,
\begin{align*}
   \, \partial_{uu}\Phi(x) &= \frac{w_u}{\eta_u(x_u+\delta_u x_{\sfp(u)})} + \sum_{v\in \chi(u)} \left(\frac{x_v}{x_u}\right)^2\frac{w_v}{\eta_v(x_v+\delta_v x_u)} \\
\, \partial_{u,\sfp(u)}\Phi(x) = \, \partial_{\sfp(u),u}\Phi(x) &= -\frac{x_u}{x_{\sfp(u)}}\frac{w_u}{\eta_u(x_u+\delta_u x_{\sfp(u)})}.
\end{align*}

\subsection{Explicit dynamics}

We are now in a position to calculate the formal dynamics.
Let us define the control by $F(\cdot,t) \seteq -c(t)$.
We claim that for $u \neq \rt$,
\begin{equation}\label{eq:dyn2}
\partial_t \left(\frac{x_u(t)}{x_{\sfp(u)}(t)}\right) =  \frac{\eta_u}{w_u} \left(\frac{x_u(t)}{x_{\sfp(u)}(t)}+\delta_u\right)
\left(\beta_{\sfp(u)}(t)- \sum_{\ell \in \cL_u} \frac{x_{\ell}(t)}{x_{u}(t)} c_{\ell}\right),
\end{equation}
where $\beta_u(t)\ge 0$ denotes the Lagrange multiplier corresponding to the constraint $x_u = \sum_{v\in \chi(u)} x_v$.

To verify \eqref{eq:dyn2}, let us define, for $u \neq \rt$,
\[
\cE(u) \seteq
\frac{w_u}{\eta_u} \frac{x_{\sfp(u)}(t)}{x_u(t)+\delta_u x_{\sfp(u)}(t)} \partial_t \left(\frac{x_u(t)}{x_{\sfp(u)}(t)}\right).
\]
Then \eqref{eq:dyn2} is equivalent to the assertion that
\begin{equation}\label{eq:ass1}
   \cE(u) = \beta_{\sfp(u)}(t) - \sum_{\ell \in \cL_u} \frac{x_{\ell}(t)}{x_u(t)} c_{\ell}(t).
\end{equation}
Recalling \eqref{eq:mirror},
the equality $\left(\nabla^2 \Phi(x(t)) x'(t)\right)_u = \left(F(t,x(t))-\gamma(t)\right)_u$ is equivalent to
\begin{alignat}{3}
   \cE(\ell) &= \beta_{\sfp(\ell)}(t) - c_{\ell}(t), \qquad && \ell \in \cL, \label{eq:eleaves} \\
   \cE(u) - \sum_{v \in \chi(u)} \frac{x_v(t)}{x_u(t)} \cE(v) &= \beta_{\sfp(u)}(t) - \beta_u(t)\,, && u \in V \setminus (\cL \cup \{\rt\}). \label{eq:egen}
\end{alignat}
Clearly \eqref{eq:eleaves} already confirms \eqref{eq:ass1} for $\ell \in \cL$.

Let us conclude by verifying \eqref{eq:ass1} for all $u \notin \rt$  by (reverse) induction on the depth.
Employing \eqref{eq:egen} along with the validity of \eqref{eq:ass1} for $\{\cE(v) : v \in \chi(u)\}$ yields
\begin{align*}
   \cE(u) &= \beta_{\sfp(u)}(t) - \beta_u(t) + \sum_{v \in \chi(u)} \frac{x_v(t)}{x_u(t)} \left(\beta_u(t) - \sum_{\ell \in \cL_u} \frac{x_{\ell}(t)}{x_u(t)} c_{\ell}(t)\right) \\
          &= \beta_{\sfp(u)}(t) - \sum_{\ell \in \cL_u} \frac{x_{\ell}(t)}{x_{u}(t)} c_{\ell}(t),
\end{align*}
where we used the fact that $x_u = \sum_{v \in \chi(u)} x_v$ for $x \in \K_{T}$.

\subsection{Relationship between discrete and continuous dynamics}
\label{sec:continuous}

Recall the setup from \pref{sec:dyn}.
We consider a system of variables $\{q_u(t) : u \in V \setminus \{\rt\}\}$
satisfying the differential equations
\begin{equation}\label{eq:star-dynamics2}
  \partial_t q_u(t) = \frac{\eta_u}{w_u} \left(q_u(t)+\delta_u\right) \left(\beta_{\sfp(u)}(t) - \hat{c}_u(t) + \alpha_u(t)\right),
\end{equation}
where $\alpha_u(t)$ is a Lagrangian multiplier for the constraint $q_u(t)\ge 0$, and $\hat{c}_u(t)$ is the ``derived'' cost
in the subtree rooted at $u$:
\begin{align*}
\hat{c}_u(t) &\seteq \sum_{\ell \in \cL_u} q_{\ell\mid u}(t) c_{\ell}(t)\\
q_{\ell \mid u}(t) &\seteq \prod_{v \in \gamma_{u,\ell} \setminus \{u\}} q_v(t)\,,\nonumber
\end{align*}
where $\gamma_{u,\ell}$ is the unique simple $u$-$\ell$ path in $T$.
Now the values $q_{\ell \mid \vvr}$ give a probability distribution on the leaves.

Let us argue that when 
the discretization parameter of the
algorithm presented in \pref{sec:main} goes to zero,
one arrives at a solution to \eqref{eq:star-dynamics2}.
Recall that in \pref{sec:alg}, we split each cost function $c\in\R_+^{\cL}$
into $M$ pieces $M^{-1} c$ and computed a sequence of
configurations $q_0,\dots,q_M \in Q_T$.
Define the piecewise-linear function $q_{(M)} : [0,1] \to Q_T$ by
\[
   q_{(M)}\left(\frac{j+\delta}{M}\right) \seteq (1-\delta) q_{j} + \delta q_{j+1}, \qquad \delta \in [0,1], j \in \{0,\ldots,M-1\}.
\]

Recalling \pref{sec:iter}, we have
\begin{align}\label{eq:qeps}
   q_j^{(u)} &\seteq \argmin\left\{ \sfD^{(u)}\!\left(p \dmid q_{j-1}^{(u)}\right) +  \llangle p, M^{-1} \hat{c}_j^{(u)} \rrangle \Bigmid p \in
Q^{(u)}_T\right\},
\end{align}
where
\[
   \hat{c}_j^{(u)} = \sum_{\ell \in \cL_v} (q_j)_{\ell \mid u} c_{\ell}.
\]
Thus for $v \in \chi(u)$ and $j \geq 1$,
\[
   \left(q_j^{(u)}\right)_v = \left[\left(q_{j-1}^{(u)}\right)_v + \delta_v\right] \exp\left(\frac{\eta_v}{w_v} \left(\beta_u - (M^{-1}(\hat{c}^{(u)}_j)_v - \alpha_v)\right)\right) - \delta_v.
\]
One can now verify that there is a constant $L=L(c,T)$ such that
\[
   \left|\left(q_j\right)_v - \left(q_{j-1}\right)_v\right| \leq \frac{L}{M}, \quad j \in \{1,\ldots,M\}, v \in V \setminus \{\vvr\}.
\]
In particular, we see that $q'_{(M)} \in L^{\infty}([0,1],\R^{V \setminus \{\vvr\}})$ for every $M \geq 1$ and, moreover,
\begin{equation}\label{eq:derivbnd}
   \sup_{M \geq 1} \left\|q'_{(M)}\right\|_{L^{\infty}} < \infty.
\end{equation}
Therefore by Arzel\`a-Ascoli, there is a subsequence $\{M_k\}$ such that $q_{(M_k)}$ converges uniformly
to a function $q : [0,1] \to Q_T$.

Since the unit ball of $L^{\infty}([0,1], \R^{V \setminus \{\vvr\}})$ is weakly compact (by
the sequential Banach-Alaoglu Theorem), we can pass to a further subsequence $\{M'_{k}\}$ along which
$q'_{(M'_k)}$ converges weakly to some $h \in L^{\infty}([0,1], \R^{V \setminus \{\vvr\}})$.
Moreover,
since $q_{(M)}(b)-q_{(M)}(a) = \int_a^b q'_{(M)}(t)\,dt$ for all $0 \leq a < b \leq 1$, it follows that
$q(b)-q(a) = \int_a^b h(t)\,dt$ as well, and therefore
for almost all $t \in [0,1]$, we have $q'(t) = h(t)$.

If we similarly linearly interpolate the cost function to $\hat{c}_{(M)} : [0,1] \to \R_+^{V \setminus \{\vvr\}}$,
then $\hat{c}_{(M_k)} \to \hat{c}$ along this sequence as well,
and
\[
   \hat{c}^{(u)}(t) = \sum_{\ell \in \cL_v} q_{\ell \mid u}(t) c_{\ell}.
\]

Now the KKT conditions for optimality in \eqref{eq:qeps} give
\[
   \nabla \Phi^{(u)}\left(q_j^{(u)}\right) - \nabla \Phi^{(u)}\left(q_{j-1}^{(u)}\right) + M^{-1} \hat{c}^{(u)}_j \in - \sfN_{Q_T^{(u)}}\left(q_j^{(u)}\right),
\]
or equivalently,
\[
   \frac{\nabla \Phi^{(u)}\left(q_j^{(u)}\right) - \nabla \Phi^{(u)}\left(q_{j-1}^{(u)}\right)}{M^{-1}} \in -\hat{c}^{(u)}_j - \sfN_{Q_T^{(u)}}\left(q_j^{(u)}\right).
\]
By standard results in differential inclusion theory (e.g., the Convergence Theorem \cite[Thm. 1.4.1]{AC84}),
we conclude that $q : [0,1] \to Q_T$ solves the differential inclusion
\[
   \nabla^2 \Phi^{(u)}\!\left(q^{(u)}(t)\right) \partial_t q^{(u)}(t) \in -\hat{c}^{(u)}(t) - \sfN_{Q_T^{(u)}}(q^{(u)}(t)).
\]
Calculating the Hessian $\nabla^2 \Phi^{(u)}$ reveals that $q(t)$ is a solution to \eqref{eq:star-dynamics2}.

\section*{Acknowledgments}
Part of this work was carried out while C. Coester was visiting University of Washington, hosted by J. R. Lee. C. Coester was partially supported by EPSRC Award 1652110.  J. R. Lee was partially supported by NSF grants CCF-1616297 and CCF-1407779 and a Simons Investigator Award.

\bibliographystyle{alpha}
\bibliography{MTS}

\end{document}